\newcolumntype{L}[1]{>{\raggedright\let\newline\\\arraybackslash\hspace{0pt}}m{#1}}
\newcolumntype{C}[1]{>{\centering\let\newline\\\arraybackslash\hspace{0pt}}m{#1}}
\newcolumntype{R}[1]{>{\raggedleft\let\newline\\\arraybackslash\hspace{0pt}}m{#1}}
\newcommand{\mfa}{\mathfrak{a}}
\newcolumntype{N}{@{}m{0pt}@{}}
\newcommand{\zzt}{\mathbb{Z}_2}
\newcommand{\zzts}{\mathbb{Z}^{\otimes2}_2}
\begin{document}
\title{Phase Structure of 1d Interacting Floquet Systems I: Abelian SPTs }
\author{
C.~W.~von~Keyserlingk}
\address{
 Princeton Center for Theoretical Science, Princeton University, Princeton, New Jersey 08544, USA
}
\author{
S.~L.~Sondhi}
\address{
Department of Physics, Princeton University, Princeton, New Jersey 08544, USA
}
\begin{abstract}
Recent work suggests that a sharp definition of `phase of matter' can be given for some quantum systems out of equilibrium---first for
many-body localized systems with time independent Hamiltonians and more recently for periodically driven or Floquet localized systems. 
In this work we propose a classification of the finite abelian symmetry protected phases of interacting Floquet localized systems in one dimension. We find that the different Floquet phases correspond to elements of $\text{Cl}_G\times\mathcal{A}_G$, where $\text{Cl}_G$ is the undriven interacting classification, and $\mathcal{A}_G$ is a set of (twisted) 1d representations corresponding to symmetry group $G$. We will address symmetry broken phases in a subsequent paper.
\end{abstract}
\maketitle

\section{Introduction}
The past few years have seen considerable progress in our understanding of the phenomenon of many body localization (MBL) which 
has built on the early\cite{Basko06}, seminal and rigorous\cite{Imbrie14} work that established its existence. One of the more interesting ideas that has emerged
from this work is that of eigenstate phase transitions wherein individual many body eigenstates and/or the eigenspectrum exhibit 
singular changes in their properties across a parameter boundary even as the standard statistical mechanical averages are perfectly
smooth. In recent work \cite{Khemani15}  this idea was generalized to disordered Floquet systems taking advantage of the fact that they exhibit generalizations of the notions of eigenstate and eigenvalue in the form of time-periodic Floquet eigenstates and associated quasi-energies. Ref.~\onlinecite{Khemani15} presented evidence that one dimensional spin chains with Ising symmetry exhibit four
distinct Floquet phases with either paramagnetic or spin glass order -- two of the resulting Floquet phases have no analogs in undriven systems. Disorder seems to be an essential ingredient in this generalization  -- if the driving Hamiltonians are clean\cite{Lazarides14PRE,Rigol14,Abanin14,Ponte15},  or lack sufficiently strong disorder\cite{Lazarides14PRL}, the eigenstate properties of periodically driven systems seem to exhibit ``infinite temperature'' ergodic behavior, with no vestige of  paramagnetic or spin glass quantum order.


In this paper we pick up the thread from this point and address the question of obtaining an enumeration of all possible Floquet
phases in one dimension. Specifically, we restrict ourselves to Floquet phases which do {\it not} spontaneously break any symmetry of the
drive; we will analyze the case of broken symmetry in a subsequent paper. In 1d this implies that we are looking for Floquet
versions of symmetry protected topological (SPT) phases of matter, which generalize topological insulators and superconductors to
interacting systems.

All SPT phases of matter are associated with some global symmetry group $G$, which  is not spontaneously broken. Given a symmetry group $G$ there may be many distinct SPT phases, each of which can be distinguished by their ground states -- two ground states represent the same SPT phase iff they can be connected to one another by a symmetric local unitary. A complete
classification of SPTs in 1d is available \cite{Chen11,PollmannBerg11}. The first step away from the purely ground state classification was taken
in Refs.~\onlinecite{Chandran14,Bahri15} where it was shown that in the presence of localization induced by sufficiently strong disorder, the entire spectrum (not just the ground state) of certain SPTs can carry a signature of the underlying SPT order. In 1d this is the statement that in many-body localized SPT systems, the entire spectrum has a characteristic string order. This idea was clarified recently in Ref.~\onlinecite{Potter15}. On the one hand MBL Hamiltonians are believed to be characterized by the appearance of a complete set of local integrals of the motion or $l-$bits \cite{Serbyn13a,Serbyn13cons,Huse14}. On the other hand, it is known that a proper subset of the possible SPT orders can be captured by commuting stabilizer Hamiltonians\footnote{A commuting stabilizer Hamiltonian takes form $H=\sum_r H_r$, where the $H_r$ are local and commute amongst themselves. See Ref.~\onlinecite{Bahri15} for a well explained example of a commuting stabilizer SPT Hamiltonian, and examples in the main text.}. Ref.~\onlinecite{Potter15} synthesized these observations arguing that only those SPT orders captured by commuting stabilizer Hamiltonians can have eigenstate order.

In another line of work, non-interacting Floquet systems have been investigated for non-trivial topology and building on various
examples \cite{Kitagawa10,Jiang11,Lindner11,Thakurathi13,Rudner13,Asboth14,Carpentier15} a classification has recently been obtained \cite{Nathan15,Roy15}, and also investigated in a disordered setting\cite{Titum15a,Titum15b}. This classification is
indeed richer than for the undriven problem. As Ref.~\onlinecite{Nathan15} shows, if the original equilibrium non-interacting classification was $\text{Cl}=\mathbb{Z},\mathbb{Z}_2 \text{ or }\{0\}$\cite{Schnyder08}, then the Floquet classification is of form $\text{Cl}\times \text{Cl}$.

Here we will show that the classification of symmetric Floquet states is different from both the undriven MBL SPT classification {\it and} the non-interacting Floquet classification; a simple example is given by the $G=\mathbb{Z}_2 \times \mathbb{Z}_2$ bosonic SPT considered in detail in \secref{s:Z2xZ2}. Our general approach is as follows. We start with a Floquet drive with associated Floquet unitary $U_{f}\equiv U(T)$. We assume that $U_{f}$ has a prescribed eigenstate order which we call the bulk order (measured by a string order parameter for unitary $G$, and encoded in the form of the local conserved quantities). To focus on those Floquet systems potentially resilient against heating to ``infinite temperature'', we consider only those bulk orders which are many-body localizable in the sense of Ref.~\onlinecite{Potter15} -- in practice, this means we restrict ourselves to states with on-site symmetry groups $G$. We make a further simplification by assuming $G$ is abelian.  

In the undriven setting, the classification of 1d SPT eigenstate order is captured by how the symmetries in the problem act projectively at the edge\cite{Kitaev11,Chen11,PollmannBerg11}, this being captured almost entirely\footnote{For fermionic states, the fermion parity of the symmetry action at the edge is also important\cite{Kitaev11}.} by a so-called 2-cocycle $c(g,h)\in \text{U}(1)$\footnote{See also Ref.~\onlinecite{Pollmann10,PollmannBerg11} for a more pedestrian exposition, and  Ref.~\onlinecite{ChenScience} and \secref{s:algebraic} for an introduction to cocycles.}. We conjecture that  in the driven MBL setting in addition to this information there is just one further piece of data $\kappa:G\rightarrow  \text{U}(1)$ characterizing the commutation between the symmetry action local to the edges and the Floquet unitary $U_f$ itself. For unitary symmetries, we show that this $\kappa$ is a 1d representation of $G$. For anti-unitary symmetry groups of form $G=G'\times \mathbb{Z}^{\mathscr{T}}_2$ where $G'$ is unitary and $ \mathbb{Z}^{\mathscr{T}}_2$ is $\mathscr{T}^2=1$ time reversal, our results are less certain, but we conjecture that $\kappa$ obeys a twisted 1-cocycle condition \eqnref{eq:twistedrep}. In any case, the set of all such $\kappa$ is denoted $\mathcal{A}_G$. Hence our proposed interacting classification for Floquet drives is of the form $\text{Cl}\times\mathcal{A}_G$ where $\text{Cl}$ is the undriven classification. A compatible set of results was obtained independently shortly after the appearance of the present work in Refs.~\onlinecite{Else16,Potter16,Harper16}. 

 To support our conjecture we investigate in more detail the structure of local symmetric Floquet unitaries with a complete set of local bulk integrals of motion. On an open chain, we argue that such a unitary can be brought into a form $U_f=v_L v_R e^{- i f }$ where $v_L,v_R$ are local to the left/right end of the system and both commute with $f$, a local functional of the local bulk conserved quantities characterizing the bulk order. We formulate $\kappa$ in terms of $v_L,v_R$, and show that it is robust under arbitrarily large but symmetric modifications to the Floquet drive local to the edges as well as small symmetric perturbations to the bulk.  

The balance of the  paper is set out as follows. We start in \secref{s:motivatingexample} with  a 1d Floquet system with class D (fermion parity protected) Kitaev chain eigenstate order. We use a novel framework to reproduce the $\mathbb{Z}_2\times \mathbb{Z}_2$ classification obtained from band theory, and verified in an interacting setting in Ref.~\onlinecite{Khemani15}. This helps us to motivate a more general framework in \secref{s:generalities}. Then in \secref{s:algebraic} we reinterpret our results more simply as an extension of the undriven algebraic classification of Ref.~\onlinecite{Kitaev11}. In \secref{s:Z2xZ2} we examine the spectrum of the interacting bosonic $G=\zzts$ drive, explaining what kinds of edge modes are present. In \secref{s:BDI} we deal separately with two examples of time reversal invariant SPTs, the latter being an interacting driven version of fermionic Class BDI. We conclude in \secref{s:conclusion}. 

\section{Motivating example: Class D in 1d}\label{s:motivatingexample}
\begin{figure}[h]
\includegraphics[width=0.95\columnwidth]{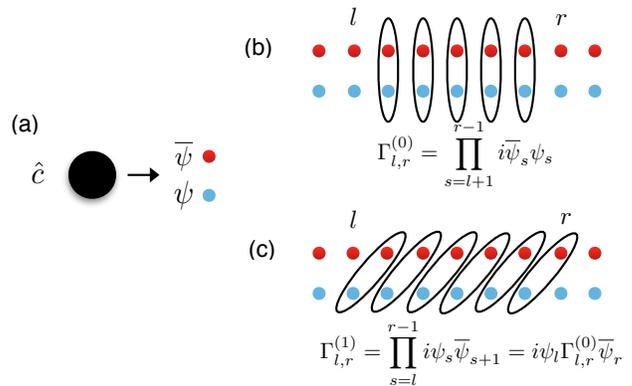}
\caption{(Color Online): (a) Each on-site spinless fermion is equivalent to two independent Majorana degrees of freedom. (b) The string order operator for the non-topological phase involves a product of onsite fermion parity operators $(-1)^{\widehat{c}^{\dagger}_s \widehat{c}_s}=i\psi_s \psib_s$, while the string order operator for the topological phase (c) involves a product of bond fermion parity operators $i\psi_s \psib_{s+1}$. }
\label{classDfigure}
\end{figure}

\begin{figure}[h]
\includegraphics[width=0.95\columnwidth]{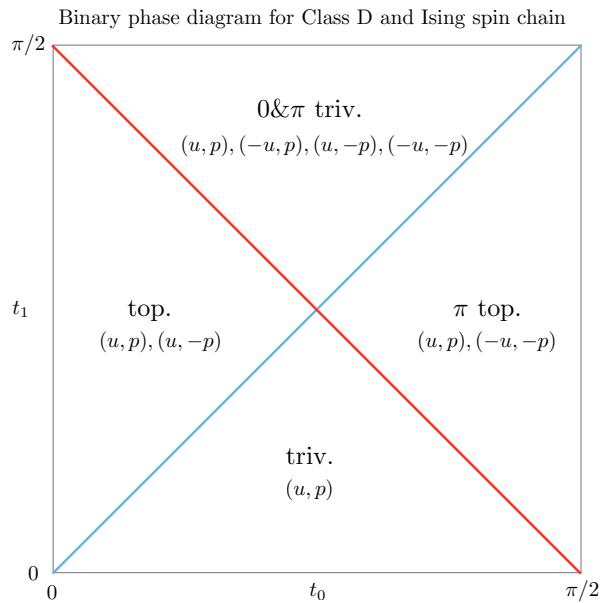}
\caption{(Color Online): This shows the phase diagram for the binary drive in \eqnref{eq:classDUf}. The red and blue line separate distinct Floquet phases. The lists involving $(u,p)$ summarize the protected multiplets in the spectrum for an open chain e.g., in the $0 \& \pi$ trivial phase, if there is a state with $U_f,P$ eigenvalues $(u,p)$  then there are guaranteed to be states at $(-u,p),(u,-p),(-u,-p)$ up to exponential corrections. } 
\label{classDPhasediagram}
\end{figure}

We will now consider in some detail a particular case that will explain the logic we follow in the general case. This is the case of Floquet
drives defined by fermionic Hamiltonians $H(t)$ which conserve fermion parity. For quadratic time independent Hamiltonians this is Class D
in the Altland-Zirnbauer classification. We will use the same nomenclature for our interacting time dependent problem. For the
non-interacting Floquet problem the list of phases for Class D is known and we will explain how this exhausts the list of phases in the
interacting MBL setting as well; we note that previous work \cite{Khemani15} has shown by explicit computation that the 
non-interacting phases do continue into this setting but not settled the question of whether others exist.   We now, successively,
review the basics of the time independent quadratic classification, its Floquet analog and an understanding of the latter appropriate
to the MBL setting and end with the promised generalization to interacting MBL Floquet systems.

\subsection{Time independent SPT phases}

There are just two SPTs with just $\mathbb{Z}^{\text{fp}}_{2}$ fermion parity
symmetry in 1d\cite{Chen11,Kitaev11}. They have model Hamiltonians

\begin{align*}
H_{0}= & -\sum^{N}_{i=1} ih_{i}\psi_{i}\bar{\psi}_{i}\\
H_{1}= & -\sum^{N-1}_{i=1} iJ_{i}\psi_{i}\bar{\psi}_{i+i}
\end{align*}
where the Hilbert space consists of $N$ spinless fermion degrees of freedom $\widehat{c}$, or equivalently two $\psi,\psib$ Majorana fermion degrees of freedom per site defined by $\widehat{c} = (\psi - i \psib )/2$. If we choose the $h_{i},J_{i}$ to be translationally invariant then $H_{0},H_{1}$ encode the well known Kitaev trivial/topological 1d wire fixed point states. Hamiltonians in the same phase as $H_1$ are called topological because they are associated with an (exponentially) protected spectral pairing on an open system associated with a protected
Majorana mode at its edge. Concretely,  for $H_{1}$ presented above, we can find simultaneous
eigenstates of $H_{1}$ and fermion parity $P\equiv \prod^N_{i=1} i \psib_i \psi_i$. The operator $\bar{\psi}_{1}$
commutes with $H_{1}$  but anti-commutes with $P$. Hence each energy
eigenvalue $E$ of $H$ is associated with at least two states $\mid E,+\rangle,\mid E,-\rangle=\bar{\psi}_{1}\mid E,+\rangle$ with fermion parity $\pm1$ respectively. $H_{0}$, the trivial state, has no such protected
degeneracies.

Another way to distinguish the ground states of the trivial/topological phases is through the use of string order parameters. That is, if we define
\be\label{eq:classDstringorder}
\Gamma_{l,r}^{(c)}=\begin{cases}
\prod_{l<s<r}i\psi_{s} \psib_{s} & c=0\\
i\psi_{l} \ds{\prod_{l<s<r}i\psi_{s} \psib_{s}} \psib_{r}& c=1
\end{cases}
 \ee
then $\langle\Gamma_{l,r}^{(0)}\rangle_{\text{gs}}$ is long-ranged/exponentially decaying in the trivial/topological phases respectively, and $\langle\Gamma_{l,r}^{(1)}\rangle_{\text{gs}}$ is long-ranged/exponentially decaying in the topological/trivial phases respectively (see Ref.~\onlinecite{Else13} for an illuminating account of string order in 1d SPTs). We will thus refer to the trivial/topological states as having type $0,1$ string orders respectively. If we choose $h_{i},J_{i}$ strongly disordered \textit{all of the eigenstates} of $H_{0},H_{1}$ necessarily have string order of types $0,1$ respectively, and this statement is at least perturbatively stable to the inclusion of interactions\cite{Potter15}. 

\subsection{Quadratic Floquet phases}

There are four known phases of the Class D Floquet problem. Following Ref.~\onlinecite{Khemani15} we can exhibit them in a simple
model of a binary Floquet drive (see also Ref.~\onlinecite{Thakurathi13}) using the reference Hamiltonians $H_{0},H_{1}$
\[
U(t)=\begin{cases}
e^{-iH_{0}t} & 0\leq t<t_{0}\\
e^{-iH_{1}(t-t_{0})}e^{-iH_{0}t_{0}} & t_{0}\leq t<t_{0}+t_{1}
\end{cases}
\]
where we pick $h_{i},J_{i}=1$ to be uniform. Eventually we will disorder these couplings, but we assume uniformity for now for ease of exposition. The final Floquet unitary is simply 
\begin{equation}\label{eq:classDUf}
U_f\equiv U(T)=e^{- i H_1 t_1 } e^{- i H_0 t_0 } \punc{.}
\end{equation}

The phase diagram of our binary drive as a function of $t0,t_1$ has some manifest periodicities. Note that $e^{\pi\psi_{i}\bar{\psi}_{i}}=-1$  whence the replacement $t_{0}\rightarrow t_{0}+n\pi$ shifts all quasi-energies by  $ N \pi$ but otherwise the Floquet eigenstate properties remain unchanged. The same holds true for shifts like $t_{1}\rightarrow t_1 +n \pi$.  Hence the eigenstate properties of $U_f$ are invariant under $t_i \rightarrow t_i+n_i \pi$. Another thing to note is that for systems with an even number of sites the unitary $i^{N/2}\prod_i \psi_i$ effectively flips $t_i \rightarrow -t_i$ while $\prod_{i \text{ even}} i\psi_i\bar{\psi}_i$ flips $t_1\rightarrow -t_1$. Therefore the eigenstate properties of $U_f$ are also invariant under such inversions and reflections in $t_0,t_1$. From this combination of shift and reflection symmetries in $t_{0,1}$, it suffices to consider a unit cell of the phase diagram  $t_{0,1} \in [0,\pi/2]$ as shown in \figref{classDPhasediagram}. The phase transition lines drawn in the diagram are straightforwardly obtained by diagonalizing $U(T)$
for closed chains where each individual momentum sector only presents a two dimensional problem. Of these, the boundary at small $t_{0},t_{1}$, can be obtained by using the BCH formula to show that $U_f\approx e^{-iH_{0}t_{0}-iH_{1}t_{1}}$. In this regime, the eigenstates are determined by the effective Hamiltonian $H_{0}t_{0}+H_{1}t_{1}$, which is expected to be fully trivial/topological
for $\left|t_{0}\right|>\left| t_{1}\right|$ and $\left|t_{0}\right|<\left|t_{1}\right|$ respectively.

We will now develop an analytical, spatially local, picture of this phase diagram, which in turn will guide our  attempt to classify 1d Floquet phases---we do this by focusing on the boundaries of the fundamental region where the Floquet unitaries will exhibit localization even
absent disorder. In the regions labelled `triv',  representative unitaries are obtained by setting $t_1=0$ i.e., $U_f=e^{-i H_0 t_0}$. It is clear that the eigenstate properties of these unitaries are simply those of the trivial Hamiltonian $H_0$ and {\it all} of the eigenstates have $c=0$ string order. This is clearly a consequence of the (trivial) localization of $H_0$. 
In the region labelled `top', representative unitaries are obtained by setting $t_0=0$ i.e., $U_f=e^{-i H_1 t_1}$, so the eigenstate properties of these are simply those of the topological Hamiltonian $H_1$.  All of the eigenstates have $c=1$ string order and on an open system, this drive will have a protected Majorana at its edges commuting with $U_f$, and a spectral pairing associated with this Majorana.

The $\pi$ topological phase is new to the driven setting. As an example, set $t_0 = \pi/2$ and $t_1=\epsilon < \frac{\pi}{2}$
$$
U_f = e^{-  i t_1 H_1} \prod_{s} i (i \psib_{s} \psi_{s}) \propto e^{-  i t_1 H_1} P \punc{.}
$$
This is simply the unitary associated with a topological drive (discussed above) multiplied by the global fermion parity operator which is itself a good quantum number of $H_1$. There is a complete basis of eigenstates with $c=1$ topological string order, which follows from the fact they have local integrals of the motion of form $i\psi_s \psib_{s+1}$. Now, rewrite $U_f$ in terms of said local integrals of the motion to obtain
$$
U_f \propto e^{-  i (t_1+\pi/2) H_1} i \psib_1 \psi_N \punc{.}
$$
This unitary looks like the topological drive above (with  $t_1$ shifted), multiplied by a term $i \psib_1\psi_N$ which commutes with the bulk local integrals of motion. Thus we can diagonalize the bulk unitary $e^{-  i (t_1+\pi/2) H_1}$ and the edge  unitary $U_{f,\text{edge}}=i \psib_1 \psi_N$ simultaneously, writing eigenvalues of $U_f$ as $u = u_b u_e$ where $u_b \in \text{U}(1),u_e=\pm 1$ are the eigenvalues of  the bulk and edge unitaries respectively. Note that $\psib_1$ anti-commutes with $U_{f,\text{edge}}$ and commutes with the bulk unitary. Hence  if $u=u_b u_e$ is an eigenvalue of $U_f$ then so is $-u = u_b \times(-u_e)$. In other words if $U_f \mid  u \rangle=u  \mid  u \rangle$ then $ \psib_1\mid  u \rangle$ has eigenvalue $-u$. This $\pi$ shift symmetry in the argument of $u$ associated with the boundary Majoranas is the reason we call this state the $\pi$ topological state. 

In summary, we identified the eigenstate order of the drive, wrote the unitary in terms of the corresponding local integrals of motion. The resulting unitary looked like a simple topological drive $e^{-  i (t_1+\pi/2) H_1}$ multiplied by a term $i \psib_1 \psi_N$ which hops Majoranas between the distant edges. This implied a spectral pairing at quasi-energy $\pi$.
 
We can treat the $0\&\pi$ phase analogously. On the boundary of that region, the eigenstates have type $0$ (trivial) string order \eqnref{eq:classDstringorder} and at a given edge, there is a $0$ and a $\pi$ quasi-energy Majorana mode. To see this, set $t_0=\epsilon < \frac{\pi}{2}$ and $t_1=\pi/2$. The resulting unitary simplifies to  
\begin{align}
U_f &=\prod^{N-1}_{s=1} i (i \psi_{s} \psib_{s+1}) e^{-  i t_0 H_0} \nonumber \\
& \propto P \psib_1e^{-  i t_0 i\psib_1  \psi_1}  \psi_N  e^{-  i t_0 i\psib_N  \psi_N} \times e^{-  i t_0 \sum^{N-1}_{s=2} i\psib_s  \psi_s} \nonumber\\
&  \rightarrow  P \psib_1 \psi_N  \times e^{-  i t_0 \sum^{N-1}_{s=2} i\psib_s  \psi_s}\label{eq:classDlocun} \punc{.}
\end{align}
In the last line we used a local symmetric unitary change of basis (implemented by $W = e^{ \frac{ i t_0}{2} \db{ i\psib_1  \psi_1+ i\psib_N  \psi_N}}$). Note that the on-site fermion parities $i\psib_s \psi_s $ are local integrals of motion in the bulk ($s=2,\ldots,N-1$). As in the previous example, we use these local integrals of motion to re-express the unitary as
\be
U_f \propto   i \psib_1   \psi_N  e^{-  i (t_0+\pi/2) \sum^{N-1}_{s=2} i\psib_s  \psi_s} \punc{.}
\ee
This looks like a bulk non-topological drive multiplied by a Majorana tunneling operator $i \psib_1   \psi_N$. Note that the edge degrees of freedom are completely decoupled from the bulk so we can simultaneously diagonalize the bulk   $e^{-  i (t_0+\pi/2) \sum^{N-1}_{s=2} i\psib_s  \psi_s}$ unitary and the two site edge unitary 
 $$
 U_{f,\text{edge}} =  i\psib_1   \psi_N =  i e^{-i\frac{\pi}{2} i \psib_1   \psi_N }\punc{.}
 $$
Note that the two boundary sites $1,N$  involve the \textit{four} Majoranas  $\psib_1,\psi_1,\psib_N,\psi_N$. This two-site unitary has two useful independent integrals of motion $ U_{f,\text{edge}} = i \psib_1 \psi_N$  and $ P_{\text{edge}} = i \psib_1  \psi_1 i\psib_N  \psi_N$ -- note these are also integrals of motion of the original unitary $U_f$ as well. Picking a reference eigenstate $\mid 1,1\rangle$ of $U_{f,\text{edge}}, P_{\text{edge}} $ for the two site problem, we can toggle between the four eigenstates of $U_{f,\text{edge}}$ as shown in \tabref{tab:classDreps}. 

\begin{table}[t]
\caption{Eigenstates of the unitary $U_{f,\text{edge}}$ involving sites $1,N$.}
\begin{center}
\begin{tabular}{|c|c|c|}
\hline 
 & $U_{f,\text{edge}}$ & $P_{\text{edge}}$\tabularnewline
\hline 
$\mid1,1\rangle $ & $1$ & $1$\tabularnewline
\hline 
$\psi_1 \mid1,1\rangle $ & $1$ & $-1$\tabularnewline
\hline 
$\psib_1 \mid1,1\rangle  $ & $-1$ & $-1$\tabularnewline
\hline 
$\psib_1 \psi_1 \mid1,1\rangle  $ & $-1$ & $1$\tabularnewline
\hline 
\end{tabular}
\end{center}
\label{tab:classDreps}
\end{table}%

Note that the edge degrees of freedom have two eigenstates at each of $U_{f,\text{edge}}=\pm 1$. It is straightforward to use the edge properties listed in \tabref{tab:classDreps} to show that the eigenstates of the full Floquet unitary come in quadruplets with $U_f,P$ eigenvalues $(u,p),(u,-p),(-u,-p),(-u,p)$. From  \tabref{tab:classDreps} we see that $\psi_1$ is associated with a flip $(u,p)\rightarrow (u,-p)$ while $\psib_1$ is associated with $(u,p)\rightarrow (-u,-p)$. Hence we think of $\psi_1,\psib_1$ as being a zero/$\pi$ quasi-energy Majoranas respectively.

Finally we can offer some intuition regarding the somewhat physically opaque constructions. The non-trivial drives (i.e., the $\pi$ and $0\&\pi$ drives) are associated with a tunneling operator of the form  $\psi_1 \psib_N$.  We can think of these operators as pumping fermion parity charge from one edge to the other, across the entire system. So non-trivial Floquet drives differs from trivial Floquet drives insofar as a charge of the symmetry group $G=\mathbb{Z}_2$ has been pumped across the system.

\subsection{Generalizing to the MBL regime}\label{ss:towards}
We were able to understand the specific class D drives above by reducing the Floquet unitary to the form
\be
U_f=v_L v_R e^{- i f } \label{eq:decomprough}
\ee
where $f$ is some functional of local bulk l-bits $\Gamma_s $, and $v_L,v_R$ are operators localized at the left and right edges of the system respectively which commute with all bulk $\Gamma_s$. For the non-trivial Floquet drives (i.e., the $\pi$ and $0\&\pi$ drives),  $v_L,v_R$ were both fermion parity odd unitary operators.

This begs the question: Can we always reduce $U_f$ to this simple form, and does the fermion parity of $v_L,v_R$ always indicate whether or not the Floquet drive is trivial? We claim yes.  In full,  we will argue that fermion parity symmetric Floquet unitaries with a complete set of bulk local integrals of motion and an associated trivial/topological eigenstate order : {\bf(i)} can be written as \eqnref{eq:decomprough}; {\bf(ii)} $v_L,v_R$ have definite (and identical) fermion parity as operators, and {\bf(iii)}; the parity of $v_L,v_R$ is uniquely determined by $U_f$ and robust to arbitrarily large parity symmetric modifications to the unitary near the edges, and sufficiently small bulk perturbations.

For {\bf(i)} it is essential to assume the existence of local conserved quantities -- this is presumably necessary at the outset if we wish for our system to not heat up in the sense of Refs.~\onlinecite{Lazarides14PRE,Rigol14,Abanin14,Ponte15}. In \appref{app:flocal} we argue that  $f$ can be chosen to be a local function of these local conserved quantities because $U_f$ arises from a time dependent local Hamiltonian. Before diving into the fuller discussion of the more general case in \secref{s:generalities}, and assuming {\bf(i)}, let us give some flavor of the arguments for {\bf(ii)},{\bf(iii)}.  Using \eqnref{eq:decomprough}, and the fact that $P$ commutes with both $U_f$ and the bulk l-bits, it follows that $[P:v_L v_R]=1$ where we define 
$$
[A:B]\equiv ABA^{-1} B^{-1}\punc{.}
$$
for invertible operators $A,B$. Now, as $P$ is a local unitary circuit (of depth 1), it is straightforward to see that $[P:v_{L/R}] $ is a unitary localized near the left/right end of the system respectively. Indeed we can write $[P:v_{L}] = v_{L} \theta_{L}$ and $[P:v_{R}] = \theta_{R} v_{R} $ where $\theta_{L/R}$ is some unitary operator localized near the left/right end of the system. $[P:v_L v_R]=1$ implies 
\be\label{eq:FP}
v_L v_R = v_L \theta_{L} \theta_{R}  v_R  \implies  \theta_{L} \theta_{R}=I \punc{.}
\ee
But $\theta_{L,R}$ are local to $L,R$ respectively, distant from one another, and yet inverse to one another. The only possibility is that $\theta_{L} = \theta_{R}^{-1}=e^{i \theta} I$ is a pure phase. Using the fact that $P^2=1$ it moreover follows that $e^{i \theta} = \pm 1$. Thus, $v_L,v_R$ have definite and equal fermion parities. 

For {\bf(iii)} we need to show that the parities remain unchanged if we augment our drive $U_f \rightarrow U_f w_{L,R} $ where $w_{L,R}$ are parity symmetric unitaries localized near the left/right edges of the system respectively. Heuristically speaking, we are just modifying $v_{L,R}\rightarrow v_{L,R} w_{L,R}$, which will not change the fermion parities of $v_{L,R}$ because $w_{L,R}$ are parity symmetric -- of course, this is a little misleading, because the modified $v_{L,R}$ do not necessarily commute with all of the bulk l-bits as required in \eqnref{eq:decomprough}. A fuller argument is provided in \secref{ss:Robust}. See also \appref{app:pumpedcharge} and \appref{s:classDedge} for a distinct and potentially tighter argument using string order parameters. 

Last, we wish to argue that the parity of $v_{L,R}$ is robust to sufficiently small bulk perturbations. This statement is supported by the observation that in the non-interacting Floquet setting with a random disorder configuration, the $0$ and $\pi$ Majoranas eventually decay into the bulk with probability $1$. The decay length is determined by the average behavior of the random couplings (see Ref.~\onlinecite{Gannot15} for an example of such a calculation, using transfer matrices). Upon modifying the bulk couplings smoothly, the decay length changes smoothly, and for sufficiently small changes the Majorana edge mode is robust with probability $1$. Thus in the non-interacting setting, the edge structure is at least statistically robust to small adjustments to the bulk. In our formalism when the $v_L,v_R$ are parity odd, they are the many body analogues of the $\pi$ Majoranas in the non-interacting setting. In this case we expect a similar statistical statement to hold. Namely, upon modifying the bulk couplings slightly, the $v_L,v_R$ operators remain fermion parity odd, and localized to the edges with probability $1$.

\section{General Framework}\label{s:generalities}
The discussion in the previous sections focussed on a fermion parity symmetric system $G=\mathbb{Z}^{\text{fp}}_2$. Here we consider more general Floquet drives with an on-site finite global abelian symmetry group $G$ with global generators $V(g)$. Start with some spatially local $G$-symmetric and time-periodic family of Hamiltonians $H(t) = H(t+T)$, giving rise to an instantaneous unitary $U(t)=\mathcal{T} e^{-\int^{t}_{0} dt' H(t')}$. Our aim is to characterize the eigenstates of  $U_{f}\equiv U(T)$  on a system with edges.

We assume that $U_f$ has full eigenstate order i.e., assume that the disorder in the drive is sufficiently strong such that there is a complete set of local integrals of the motion (l-bits) $\{\Gamma_r\}$, which encode a known unique SPT order. As discussed in the introduction, only certain SPT eigenstate orders are expected to exist stably as the eigenstate orders of Floquet unitaries. For this reason, we restrict our attention to such `many-body localizable' SPT orders from the outset. In 1d Ref.~\onlinecite{Potter15} suggest all SPTs with finite on-site symmetry group are many-body localizable. It is for this reason we consider finite discrete $G$, and for simplicity we focus on abelian $G$.

We make a technical assumption about the l-bits: we assume they can be chosen to commute with the global symmetry generators. This latter requirement is certainly true for 1d fixed point MBL abelian SPT phases\cite{Potter15}. Perturbing symmetrically away from the fixed point models, we expect the l-bits to be smeared out $\Gamma_r \rightarrow  \Gamma'_r$, in such a way that $\Gamma'_r $ also commutes with the global symmetries. While the l-bits at the fixed points are exactly local, the $\Gamma'_r$ are only exponentially local\cite{Lieb72}.

The more general arguments in this section will follow the same format as before in the class D case. We argue that finite abelian $G$ symmetric Floquet unitaries with a complete set of bulk l-bits and an associated trivial/topological eigenstate order : {\bf(i)} Can be written as \eqnref{eq:decomprough} (with $v_{L,R},f$ obeying the conditions stated below \eqnref{eq:decomprough}); {\bf(ii)} $v_L,v_R$ can be associated with a certain (twisted) 1d representation $\kappa$ of $G$ to be defined; {\bf(iii)} this 1d representation is uniquely determined by $U_f$ and robust to arbitrarily large $G$ symmetric modifications to the unitary near the edges, and sufficiently small bulk perturbations.  {\bf(i)-(iii)}  together suggest that the symmetry protected features of Floquet drives with the mentioned properties are captured by the bulk order and the (twisted) 1d representation $\kappa$. Therefore, labelling the possible bulk orders by Cl$_G$ and the possible (twisted) 1d representations by $\mathcal{A}_G$, we conjecture that the interacting Floquet classification is Cl$_G\times\mathcal{A}_G$  for the $G$ considered here.

This section is organized as follows. We supply the arguments for {\bf(i)} in \appref{app:misclemmas}. In the next two sections we show {\bf(ii)}, i.e., how to associate $v_{L,R}$ with a certain 1d (twisted) representation of $G$. This will involve demonstrating that the quantity
\be\label{eq:pcharge}
\kappa_{L,R}(g)\equiv(V(g):v_{L,R})
\ee
which we call the `pumped charge', defines a (twisted) 1d representation of $G$. Our discussion is split between the unitary and anti-unitary cases in \secref{ss:unitary} and \secref{ss:antiunitary} respectively. \eqnref{eq:pcharge} involves a generalization of the group commutator defined by
\be \label{eq:groupcomm}
(V(g):W)\equiv V(g) W  V^{-1}(g) W^{-\alpha(g)}
\ee
where $W$ is any unitary, and $\alpha: G \rightarrow \mathbb{Z}_2$ is a homomorphism with $\alpha(g)=\pm 1$ for $g$ unitary/anti-unitary respectively. Note that for unitary $g$, $(V(g):W)=[V(g):W]$. 

In \secref{ss:Robust} we argue {\bf(iii)}, i.e., $\kappa$ is well defined and robust to arbitrarily large symmetric adjustments to the unitary local to $L,R$, and sufficiently small bulk perturbations. Finally in \secref{ss:summary} we summarize our proposed  classification, and provide some examples.

\subsection{Unitary on-site symmetry groups}\label{ss:unitary}Recall that $U_f$ is generated by a symmetric family of Hamiltonians $H(t')$. This implies that $U_f$ commutes with the global symmetry generators i.e., $(V(g):U_f)=1$ for all $g\in G$. We use this fact to constrain the symmetry properties of the $v_L,v_R$ appearing in \eqnref{eq:decomprough}. 

\begin{lemma}\label{lem:phase}
Consider system $S=[L,R]$, and finite abelian unitary symmetry group $G$. Then $\kappa_{L,R} (g)\equiv (V(g):v_{L,R})$ are $\text{U}(1)$ scalar operators, where $V(g)$ is a global symmetry transformation.
\end{lemma}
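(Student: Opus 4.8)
The plan is to mimic the class D argument around \eqnref{eq:FP}, but now for a general finite abelian unitary group $G$ acting by the depth-1 local unitary circuit $V(g)$. First I would record the consequences of the decomposition \eqnref{eq:decomprough}: since $V(g)$ commutes with $U_f = v_L v_R e^{-if}$ and also commutes with each bulk l-bit $\Gamma_r$ (by the technical assumption), and since $f$ is built out of the $\Gamma_r$, we get $[V(g):e^{-if}]=1$ and hence $[V(g):v_L v_R]=1$ for every $g\in G$. The strategy is then to show that this forces each $\kappa_{L,R}(g)=[V(g):v_{L/R}]$ to be a c-number phase, i.e.\ a $\mathrm{U}(1)$ scalar, rather than a genuine operator.

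Next I would use locality exactly as in the class D case. Because $V(g)$ is a local unitary circuit of depth one and $v_{L}$ (resp.\ $v_{R}$) is supported near the left (resp.\ right) edge, the group commutator $[V(g):v_{L}]$ is a unitary supported near the left edge, and likewise $[V(g):v_{R}]$ is supported near the right edge; moreover each can be absorbed into the corresponding $v$, writing $[V(g):v_{L}] = v_{L}\,\theta_{L}(g)$ and $[V(g):v_{R}] = \theta_{R}(g)\, v_{R}$ with $\theta_{L}(g),\theta_{R}(g)$ unitaries localized near the left/right edges respectively. Substituting into $[V(g):v_L v_R]=1$ and cancelling $v_L$ on the left and $v_R$ on the right gives $\theta_{L}(g)\,\theta_{R}(g) = I$. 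Since $\theta_{L}(g)$ and $\theta_{R}(g)$ have disjoint (distant) supports but are mutually inverse, each must be a scalar: $\theta_{L}(g) = \theta_{R}(g)^{-1} = e^{i\phi(g)} I$. Hence $\kappa_{L}(g) = e^{i\phi(g)}I$ and $\kappa_{R}(g) = e^{-i\phi(g)}I$, i.e.\ both are $\mathrm{U}(1)$ scalar operators, which is the claim. (One can additionally note, though the lemma does not require it, that finiteness of $G$ forces $e^{i\phi(g)}$ to be a root of unity.)

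The main obstacle, and the step I would write most carefully, is the locality claim that $[V(g):v_{L}]$ is genuinely localized near the left edge and can be factored as $v_L\,\theta_L(g)$ with $\theta_L(g)$ edge-localized. In the class D case this was immediate because $P$ was a strict product of on-site operators; here $V(g)$ is only a depth-1 circuit, so I would argue that conjugation by a depth-1 circuit spreads the support of $v_L$ by at most a bounded amount, so $V(g) v_L V(g)^{-1}$ is still edge-supported, and then $\theta_L(g) := v_L^{-1} V(g) v_L V(g)^{-1}$ is a product of edge-supported unitaries and hence edge-supported. Strictly, if the perturbed l-bits $\Gamma'_r$ are only exponentially (not strictly) local, then $v_{L,R}$ and the $\theta$'s are only exponentially localized, so ``$\theta_L\theta_R=I$ with distant supports implies scalar'' should be read as holding up to exponentially small tails — I would either invoke the fixed-point (strictly local) setting for the clean statement or remark that the conclusion holds exactly in the thermodynamic/fixed-point limit and approximately otherwise, consistent with the paper's standing conventions. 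The remaining manipulations (commuting $V(g)$ through $e^{-if}$, cancelling $v_L,v_R$) are routine given \textbf{(i)} and the assumption that the l-bits commute with $V(g)$.
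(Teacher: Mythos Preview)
Your argument is correct and follows essentially the same route as the paper: deduce $[V(g):v_Lv_R]=1$ from symmetry of $U_f$ and of the l-bits, then use locality plus the ``distant mutual inverses are scalars'' trick. Two small points are worth flagging. First, there is a notational slip: with your later definition $\theta_L(g):=v_L^{-1}V(g)v_LV(g)^{-1}$ one has $V(g)v_LV(g)^{-1}=v_L\theta_L$, so $[V(g):v_L]=v_L\theta_L v_L^{-1}$ rather than $v_L\theta_L$; this is harmless since the working step you actually use is $V v_L v_R V^{-1}=v_L\theta_L\theta_R v_R$. Second, the paper's proof makes explicit one thing you leave implicit: in a fermionic system, the step ``$\theta_L\theta_R=I$ with disjoint supports $\Rightarrow$ scalars'' (equivalently, the paper's commutation of $v_L$ past $\kappa_R$) requires knowing the relevant edge operators are fermion-parity even. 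The paper secures this by invoking the class~D result that $v_{L,R}$ have definite parity $p$, whence $\kappa_R$ (your $\theta_R$) has parity $p^2=1$. Your ordering cleverly avoids the explicit commutation, but the same parity fact is what guarantees the disjoint-support conclusion in the graded setting, so it is worth stating.
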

\begin{proof}
From assumption {\bf(i)} we argued that the Floquet unitary takes form \eqnref{eq:decomprough} where $v_L,v_R$ are localized at the $L,R$ edges, and commute with $f$. The unitary is $G$-symmetric so $(V(g):U_f)=1$ for all $g\in G$.  In addition, $f$  commutes with the global symmetry generators because it is a function only of l-bits, which are all assumed to commute with the global symmetries. Therefore
\be\label{eq:comm1}
1= (V(g):v_L v_R e^{-i f} )=(V(g):v_L v_R)\punc{.}
\ee 
The RHS of this equation can be expressed as (abbreviating $V(g)$ to $V$ and $\kappa_{L,R}(g)$ to $\kappa_{L,R}$)
\begin{align}
(V:v_L v_R)&= \kappa_L v_L \kappa_R   v_R (v_L v_R)^{-1}\nonumber\\
&=\kappa_L  \kappa_R v_L v_R (v_L v_R)^{-1}\nonumber\\
&=\kappa_{L}\kappa_{R}\punc{.}\label{eq:kLkRone}
\end{align}
The second equality follows  because $v_L$ and $\kappa_{R}$ commute. To show this, it suffices to note {\bf(a)} both terms are localized at the $L,R$ edges respectively, and  {\bf(b)}  in a fermionic system at least one of these terms is fermion parity even. {\bf(a)} follows from the fact $V(g)$ is a low-depth unitary, so $\kappa_{L,R} (g)\equiv (V(g):v_{L,R})$ are local to the $L,R$ part of the system respectively.  {\bf(b)} follows from the earlier argument around \eqnref{eq:FP} that $v_{L,R}$ have definite parity $p=\pm 1$, so that $\kappa_{R}$ has parity $p^2=1$ as required. \eqnref{eq:comm1} and \eqnref{eq:kLkRone} together imply that $\kappa_L \kappa_R = I$. But then $\kappa_{L,R}$ are unitaries with support far away from one another, yet have $\kappa_L = \kappa^{-1}_R$. The only possibility is that $\kappa_{L,R}$ are of the forms $e^{\pm i \theta}1$ respectively.
\end{proof}
We define the pumped charge of $U_f$ to be the quantity $\kappa\equiv\kappa_L: G \rightarrow\text{U}(1)$ in \eqnref{eq:pcharge}. Using the fact that $\kappa$ is a phase from \lemref{lem:phase}, it follows from the definition that
\be
\kappa(gh) = \kappa(g) \kappa(h)
\ee
with $\kappa(1)=1$, so that $\kappa$ forms a 1d representation for unitary symmetry groups. 
\subsection{Time reversal ($\mathscr{T}$) symmetry}\label{ss:antiunitary}
We can apply most of the arguments in the above subsection to drives with symmetry group $G=G'\times \mathbb{Z}^{\mathscr{T}}_2$ where $G'$ is unitary and $ \mathbb{Z}^{\mathscr{T}}_2$ is $\mathscr{T}^2=1$ time reversal, but there are a few complications. First in \lemref{lem:phase} we used the fact that $(V(g):U_f)=1$. This follows for unitary symmetries because each of the instantaneous Hamiltonians $H(t')$ are symmetric from which it readily follows that $U_f$ is symmetric. In contrast, if $H(t')$ is time reversal symmetric then $U_f$ will not necessarily obey $(V(\mathscr{T}):U_f)=1$.  However, if we insist in addition that $H(t')= H(T-t')$\cite{Nathan15} where $T$ is the period of the drive, then $(V(g):U_f)=1$ is guaranteed for all $g\in G$ including $g=\mathscr{T}$.

The next stumbling point in attempting to formulate an analogue of \lemref{lem:phase} is in showing that 
\begin{equation}\label{eq:key2}
(V(g):e^{-i f})=1,
\end{equation}
where $U_{\text{bulk}} = e^{-i f}$ is the bulk part of the unitary \eqnref{eq:decomprough}. While \eqnref{eq:key2} is clear in the unitary case it is less clear in the non-unitary case, and in fact we will only prove it for a subset of the time reversal invariant SPTs. To see where the problem arises, express 
\begin{equation}
 U_{\text{bulk}} =  \sum_{ J } \delta(\{\Gamma\}= J )  \underbrace{e^{-i f( J )}}_{\beta( J )} \nonumber
\end{equation}
where $\sum_{ J }$ is a sum over all the possible values for all l-bits, and $\beta( J ) $ are necessarily U$(1)$ numbers. By assumption the global time reversal generator $V(\mathscr{T})$ commutes with all the l-bits $\Gamma$. Then \eqnref{eq:key2} holds provided $\beta(J)= \beta(J^*)$. One way to guarantee this is to consider only those SPT orders for which the eigenvalues are real e.g., $J=\pm 1$.  We will assume this condition, as it is automatically true in the cases we wish to consider in this text \secref{s:BDI}. 

Assuming then that \eqnref{eq:key2} is true, \eqnref{eq:comm1} must hold. We now revisit the argument in \lemref{lem:phase}  to find (abbreviating $\alpha(g)$ to $\alpha$, $V(g)$ to $V$, and $\kappa_{L,R}(g)$ to $\kappa_{L,R}$)
\begin{align}
(V:v_L v_R)&= \kappa_L v^{\alpha}_L \kappa_R  v^\alpha_R (v_L v_R)^{-\alpha}\nonumber\\
&=(V:v_L )  (V:v_R ) v^{\alpha}_L v^\alpha_R (v_L v_R)^{-\alpha}\nonumber\\
&=(V:v_L )  (V:v_R ) p^{\frac{1-\alpha}{2}}\punc{.}
\end{align}
 The second equality follows from the fact that $\kappa_R$ is fermion even and localized to the right hand edge as before. The third equality follows from the fact $v_{L,R}$ have definite and identical fermion parity $p$ as before. Hence, using \eqnref{eq:comm1} we find that $\kappa_L \kappa_R = p^{\frac{1-\alpha}{2}} I$. But, as $\kappa_{L,R}$ have support far away from one another we must have
\be \label{eq:kappaconstr2}
\kappa(g) \equiv \kappa_L(g) = \kappa^{-1}_R(g) p^{\frac{1-\alpha(g)}{2}}
\ee
is a pure phase as before. It again follows readily that $\kappa(1)=1$. One slight difference however is that 
\begin{align}
&\kappa(gh)\nonumber\\
&=  V(g) \kappa(h) v^{\alpha(h)}_L V(g)^{-1}  v_L^{-\alpha(gh)}\nonumber\\
&= \kappa^{\alpha(h)}(g) \kappa^{\alpha(g)}(h)\punc{.}\nonumber
\end{align}
Therefore the analogue of pumped charge in this non-unitary case obeys 
\be\label{eq:twistedrep}
\kappa(gh) = \kappa(g)^{\alpha(h)} \kappa(h)^{\alpha(g)}\punc{,}
\ee
so that $\kappa$ is a `twisted' analogue of a 1d representation of the group $G$.
\subsection{Robustness of pumped charge}\label{ss:Robust}
Having associated $v_L,v_R$ with (twisted) 1d representation $\kappa$, we now show {\bf(iii)}, namely that $\kappa$ is: {\bf(a)} well defined i.e., independent of the precise manner in which we decompose \eqnref{eq:decomprough}; {\bf(b)} robust to symmetric modifications of the unitary at the edges; and {\bf(c)} robust to sufficiently small symmetric bulk perturbations.  

To show {\bf (a)}, suppose we have two decompositions $U_f= v_L v_R e^{- i f} = v'_L v'_R e^{- i f'}$ obeying the conditions below \eqnref{eq:decomprough}. Are the pumped charges the same? We argue yes. First we restrict attention to those terms in $f$ involving only conserved quantities in some extensive connected sub-region of the bulk $S'$ which is nevertheless far away from both $L,R$, forming functional $f_{S'}$.  From the locality of $f$ it follows that  $e^{i f_{S'}} e^{- i f}$ acts like the identity over almost all of $S'$. Indeed 
\be\label{eq:flfr}
e^{i f_{S'}} e^{- i f} = e^{- i f_L}  e^{- i f_R}\punc{,}
\ee
where $f_L,f_R$ are functions of bulk l-bits on left/right parts of the system respectively which are widely separated from one another. Now $v_L$ commutes with $f$ by assumption, and commutes with $f_{S'}$ because it is supported far away from $S'$. It follows readily from \eqnref{eq:flfr} that $v_{L}$ commutes with $e^{-i f_{L}}$, a fact we will use shortly. Now, as $e^{- i f},e^{-i f'}$ act identically deep in the bulk it is also true that $e^{i f_{S'}} e^{- i f'}$ acts like the identity over most of $S'$, and it has a similar decomposition $e^{- i f'_L}  e^{- i f'_R}$ into unitaries based at the left and right sides of the system, and as before $v'_L$ commutes with $e^{- i f'_L}$.  Now multiplying $U_f$ by $e^{i f_{S'}}$ we obtain
$$
e^{i f_{S'}} U_f= v_L e^{- i f_L} v_R  e^{- i f_R} = v'_L e^{-i f'_L} v'_R  e^{-i f'_R}\punc{.}
$$
As $v_{L} e^{- i f_{L}},v'_{L} e^{- i f'_{L}}$ have support far away from $v_{R} e^{- i f_{R}},v'_{R} e^{- i f'_{R}}$, it follows that 
$$
v_{L} e^{-i f_{L}} =  v'_{L} e^{-i f'_{L}}  e^{i \phi}
$$
where $e^{i \phi}$ is some U$(1)$ phase, using the same reasoning deployed below \eqnref{eq:FP} and in \lemref{lem:phase}. Applying $(V(g):\cdot)$ to this equation shows $v_L,v'_L$ have the same pumped charge because bulk l-bits commute with the global symmetry generators, and $v_L,v'_L$ commute with $e^{-i f_{L}},e^{-i f'_{L}}$ respectively. Thus we have argued that the pumped charge is well defined. 

For {\bf(b)} we need to show that the pumped charge is robust to local symmetric changes at the edge. Under such modifications, the resulting unitary will still have a complete set of conserved quantities deep the bulk so $U_{f}$ still has bulk eigenstate order. We just need to show that the pumped charge is robust under modifications of form $U_{f}\rightarrow U_{f} w_{L}$ for some symmetric $w_{L}$ localized near (say) the left edge such that $(V(g):U_f)=1$ continues to hold. Under such a modification $v_L$ will change, as will some of the l-bits near the left end of the chain. However for large system size, $v_R$ should not change under such a modification, and so neither does $\kappa_R$. Hence from the constraints \eqnref{eq:kappaconstr2} between $\kappa_{L,R}$, the pumped charge cannot change. For an alternative and perhaps more rigorous characterization of the pumped charge for systems with unitary symmetry group $G$, which does not require the knowledge that we can decompose $U_S = v_L v_R e^{-i f(\{\Gamma\})}$, see \appref{app:pumpedcharge}. Unfortunately,  in the anti-unitary case, some of the methods of \appref{app:pumpedcharge} are inapplicable because of the well known\cite{Chen15} difficulties in defining a `local time reversal' string operator.
%

Last we come to the slippery issue of whether the pumped charge is robust to sufficiently small bulk perturbations, and we give a similar argument as for the class D subclass in \secref{ss:towards}. Our expectation is that for a random disorder configuration, $v_{L,R}$ are with probability $1$ localized to the edges with a localization length determined in part by the spatially averaged values of the local couplings comprising $H(t)$. Under sufficiently small changes to these local couplings, we therefore expect the localization length and $v_{L,R}$ to change smoothly. For truly small changes in $v_L$ and $v_R$, the pumped charge $(V(g):v_L)$ being discrete (a  twisted 1d representation) cannot change and remains fixed.

\subsection{Summary and examples}\label{ss:summary}
Thus, for 1d Floquet drives with finite on-site abelian symmetry group $G$ and paramagnetic bulk order, our proposed Floquet classification looks like $\text{Cl}_G \times \mathcal{A}_G$ where $\text{Cl}_G$ is the undriven paramagnetic classification, and $\mathcal{A}_G$ consists of all of the 1d (twisted) representations of group $G$. \tabref{tab:class} gives examples.

The undriven classification $\text{Cl}_G$ can be read off from existing results \cite{Kitaev11,Chen11}. For a unitary abelian $G$, the 1D representations are in bijective correspondence with $G$ itself. So the Floquet classification takes the form $\text{Cl}_G \times G$ for finite unitary abelian groups. For class D, we see that $\mathcal{A}_{\mathbb{Z}^{\text{fp}}_2} = \mathbb{Z}_2$, so our scheme (which applies to interacting systems) reproduces the $\mathbb{Z}_2\times \mathbb{Z}_2$ Floquet classification result seen in the non-interacting band theory context\cite{Nathan15}. There are numerous examples of groups where $\mathcal{A}_G  \neq \text{Cl}_G $, so that our prediction breaks the $ \text{Cl}_G  \times  \text{Cl}_G $ pattern seen in the non-interacting classification up until now\cite{Nathan15}. For example, for  $G=\mathbb{Z}_2\times \mathbb{Z}_2$ one obtains $\text{Cl}_G=\mathbb{Z}_2$ and $\mathcal{A}_G=\mathbb{Z}_2\times\mathbb{Z}_2$ -- see \secref{s:Z2xZ2} for a description of the model, and examples of drives.

On the other hand, for symmetry groups with time reversal of the form $G=G'\times\mathbb{Z}^{\mathscr{T}}_2$ where $G'$ is finite on-site unitary, we show in \appref{app:twisted1dreps} that the possible twisted 1d representations $\kappa$ are precisely $\mathcal{A}_G=H^1(G', \text{U}(1))\times \mathbb{Z}_2$  i.e., specified by a $1$d unitary representation of $G'$ and a choice of $\pm 1$. For abelian $G'$ this implies a Floquet classification of form $\text{Cl}_G\times G' \times \mathbb{Z}_2$. For example, for $G= \mathbb{Z}^{\mathscr{T}}_2$ we get Floquet classification $\mathbb{Z}_2 \times \mathbb{Z}_2$. On the other hand for a `BDI' fermion system with $G= \mathbb{Z}^{\mathscr{T}}_2 \times \mathbb{Z}^{\text{fp}}_2$ we obtain Floquet classification $\mathbb{Z}_8\times \mathbb{Z}_2\times\mathbb{Z}_2$ -- although, see  \secref{sss:stacking} where we argue that in certain regards the classification can be regarded as $\mathbb{Z}_8\times \mathbb{Z}_4$. The free fermion Floquet classification of the BDI system gives $\mathbb{Z}\times \mathbb{Z}$, so we observe an interaction induced breaking of results similar to that seen in the non-driven context Ref.~\onlinecite{Kitaev11}. We direct the reader to \secref{s:BDI} for a further discussion of this point, and examples of the different drives. As an aside we note that  Note that, while our formalism includes the possibility of anti-unitary symmetries, the notion of eigenstate order and MBL in these scenarios has not yet been established in detail (see discussion in Ref.~\onlinecite{Potter15}). For this reason our results with non-unitary groups should be considered with caution. 
 
 \begin{table}
\begin{tabular}{|C{1.6cm}|C{1.5cm}|C{1.5cm}|C{1.7cm}|N}
\hline 
On site Symm. (G) & Undriven classif. (Cl$_G$)  & Twisted 1d reps. ($\mathcal{A}_G$) & Floq. MBL PM classif. ($\text{Cl}_G\times\mathcal{A}_G$) \tabularnewline
\hline 
$\mathbb{Z}_{2}^{\text{fp}}$ & $\mathbb{Z}_{2}$ & $\mathbb{Z}_{2}$ & $\mathbb{Z}_{2}\times\mathbb{Z}_{2}$\tabularnewline
\hline 
$\mathbb{Z}_{2}^{\text{}}\times\mathbb{Z}_{2}$ & $\mathbb{Z}_{2}$ & $\mathbb{Z}_{2}\times\mathbb{Z}_{2}$ & $\mathbb{Z}_{2}\times\mathbb{Z}_{2}\times\mathbb{Z}_{2}$\tabularnewline
\hline 
$ \mathbb{Z}_{2}^{\mathscr{T}}$ & $\mathbb{Z}_{2}$ & $\mathbb{Z}_{2}$ & $\mathbb{Z}_{2}\times\mathbb{Z}_{2}$\tabularnewline
\hline 
$\mathbb{Z}_{2}^{\text{fp}}\times \mathbb{Z}_{2}^{\mathscr{T}}$ & $\mathbb{Z}_{8}$ & $\mathbb{Z}_{2}\times\mathbb{Z}_{2}$ & $\mathbb{Z}_{8}\times\mathbb{Z}_{2}\times\mathbb{Z}_{2}$\tabularnewline
\hline 
\end{tabular}
\caption{This table gives examples of our proposed $\text{Cl}_G\times\mathcal{A}_G$ classification scheme for MBL Floquet drives in 1d with finite abelian on-site symmetry group $G$, and full SPT eigenstate order. Here Cl$_G$ is the undriven SPT classification, and $\mathcal{A}_G$ are the set of 1d or twisted 1d representations of $G$ defined in \eqnref{eq:twistedrep}. Only certain many-body localizable\cite{Potter15} SPT eigenstate orders are expected to persist in the Floquet setting\cite{Khemani15}. For this reason we restrict attention to SPT orders with finite $G$, which Ref.~\onlinecite{Potter15} suggests are many-body localizable. We further restrict to abelian $G$ for simplicity. $\zzt^{\text{fp}},\zzt^{\mathscr{T}}$ are the fermion parity, and $\mathscr{T}^2=1$ time reversal symmetry groups respectively.}\label{tab:class}
\end{table}

\section{An algebraic characterization}\label{s:algebraic}
Having argued that the pumped charge is a robust property of symmetric Floquet unitaries with eigenstate order, we now show how the presence of the pumped charge affect the spectra of Floquet unitaries. In the process we condense our results into a concise algebraic formalism. We do this by extending the formalism of Ref.~\onlinecite{Kitaev11}, which was developed to deal with equilibrium SPT states. In this undriven setting, Ref.~\onlinecite{Kitaev11} starts by considering the nearly degenerated ground states of an SPT on an open 1d chain $S=[L,R]$. These ground states are indistinguishable in the bulk, and differ only near $L,R$. The global symmetry group acts on the this low energy space. By locality, for an extensively large system, the global symmetry must act like $V(g) \rightarrow \widehat{g}_L \widehat{g}_R$ within this space, where $\widehat{g}_{L,R}$ localized near the $L,R$ end of the system respectively. Note that $\widehat{g}_{L,R}$ are only defined up to phases, and indeed need only obey $\widehat{g}_{L}\widehat{h}_{L}=c(g,h)\widehat{gh}_{L}$ where $c$ is a $2$-cocycle defining a projective representation of $G$ (similar for the right edge). To consider anti-unitary symmetry groups, it is helpful to define a homomorphism $\alpha: G \rightarrow \mathbb{Z}_2$ where $\alpha(g)=\pm 1$ for $g$ unitary/anti-unitary respectively. With this in mind the associativity of the $G$ action on say the left edge leads to a relation
\be
c(g,h)^{\alpha(f)} c(fg,h)^{-1} c(f,gh) c(f,g)^{-1}=1.
\ee
which is the defining relation for a 2-cocycle.  Ref.~\onlinecite{Kitaev11} then argue that this 2-cocycle is the relevant datum identifying the SPT
in question. Within the low energy subspace, in the thermodynamic limit, $H$ must act like a scalar, and it can be argued that $\hat{g}_L, \hat{g}_R$ individually commute with the Hamiltonian. Hence, the low energy subspace is some representation of the algebra generated by the operators $\{\hat{g}_L, \hat{g}_R\}$. The form of this algebra is determined entirely by the choice of cocycle $c$\footnote{As well as the fermion parity of the $\hat{g}$ in fermionic systems.}. This is the classification of $1$D SPTs in brief. For system's whose entire spectrum is MBL with SPT eigenstate order, the above statements hold not only for the ground state subspace, but for a complete set of degenerate multiplets of excited eigenstates. 
%
%
%
%

Consider now a Floquet SPT drive on an open chain with full eigenstate order. We can play a similar
game. Again we can fix the bulk eigenstate (i.e., bulk conserved quantities) and consider the symmetry action in this restricted subspace. Again, when we consider the symmetry acting on a particular edge, we find that it is characterized by some
2-cocyle $\widehat{g}\widehat{h}=c(g,h)\widehat{gh}$. However, in the Floquet
case there is potentially another datum determining the edge structure. In particular, while previously $\widehat{g}_L$ commuted individually with the Hamiltonian, in the Floquet case we see that $\widehat{g}_{L,R}$ need not necessarily commute with $U_f$. To see this, let us treat unitary and anti-unitary $G$ separately. 

In the previous section we saw that the Floquet unitary takes the form $U_f=v_L v_R e^{- i f}$. Fixing the bulk state -- i.e., the conserved quantities in the bulk --  the Floquet unitary acts like $U_f \propto v_L v_R$. We argued in \lemref{lem:phase} that the global symmetry commutes with $v_L,v_R$ individually up to a phase characterized by $\kappa_L(g)=[V(g):v_L]$ .  This quantity, in turn, determines the commutation between $\widehat{g}_L,\widehat{g}_R$ and $U_f$. So, the algebra of symmetry operators in the edge space are characterized by a 2-cocycle $c$ and  $\kappa_L(g)=[V(g):v_L]$, which defines a 1d representation of the gauge group. For fermionic systems one should bear in mind the possibility that operators on distant edges may anti-commute if they are fermion parity odd.

Recall that in the anti-unitary symmetry group case we call the Floquet unitary symmetric if $i \log U_f$ can be chosen to be a $G$ symmetric Hamiltonian. This means that $V(g) U_f V^{-1}(g)= U^{\alpha(g)}_f$. As before, consider the action of $U_f$ into the edge subspace, which again by locality takes form $v_L v_R$. The global symmetry in this case should obey $(V(g):v_L v_R)=1$ (see \eqnref{eq:groupcomm}) -- although we were unable to prove this in the anti-unitary case. Assuming we can, we have $(V(g):v_{L,R})=e^{i \theta_{L,R}} 1$ -- this information is captured by just one quantity $\kappa_L(g)=(V(g):v_L)$. This $\kappa_L$ quantity in turn determines the  $U_f$ commutation relations with the edge symmetry operators $\widehat{g}_L,\widehat{g}_R$. So the symmetry algebra at the edge is again characterized by a 2-cocycle and a U$(1)$ phase $\kappa_L(g)=V(g) v_L V(g)^{-1} v^{-\alpha(g)}_L$.  When the symmetry group is anti-unitary, this phase does not quite form a 1d representation as in the unitary case. Instead it obeys \eqnref{eq:twistedrep} hence the data determining the drive are $c(g,h), \kappa(g)$ where $\kappa(g)$ is a kind of twisted 1d representation, the set of which we denote $\mathcal{A}_G$.

Having proposed a classification for 1d Floquet SPT drives, in the next two sections we describe some instructive examples. First in \secref{s:Z2xZ2} we look at an interacting bosonic Floquet SPT drive with $G=\mathbb{Z}_2\times \mathbb{Z}_2$. Then in \secref{s:BDI} we look at two examples of drives with anti-unitary symmetry groups of form $G=  \mathbb{Z}_2^{\mathscr{T}}, \mathbb{Z}_2^{\mathscr{T}}\times \mathbb{Z}_2^{\text{fp}}$ -- the latter is of particular interest, as it corresponds to an interacting version of the fermionic BDI symmetry class. In all cases, we will provide explicit examples of drives within each of the proposed Floquet phases.

\section{Edge structure for $G=\bbZ_{2}\times \bbZ_{2}$ Floquet drives}\label{s:Z2xZ2}
In this section we focus on bosonic paramagnets with unbroken global $G=\bbZ_{2}\times \bbZ_{2}$ symmetry. This example is interesting because it involves an intrinsically interacting bosonic system, and it breaks the Cl$\times$Cl classification pattern seen in the classification of non-interacting fermionic Floquet drives\cite{Nathan15}. To wit, $G=\bbZ_{2}\times \bbZ_{2}$ has an undriven SPT classification of Cl$=\mathbb{Z}_2$, corresponding to a trivial paramagnet and non-trivial SPT state, while we predict in \secref{s:generalities} a $\mathbb{Z}_2\times \mathbb{Z}_2\times \mathbb{Z}_2 (\neq \text{Cl}\times \text{Cl})$ classification. After describing the SPT order in the undriven setting, we provided examples of drives in each of the eight putative phases in the conjectured classification. We describe the edge theory for some of these drives using the formalism of \secref{s:algebraic}. 

Consider a chain with on-site local Hilbert space ${\mid g^{1},g^{2}\rangle}$ where $g^{1},g^{2}\in\mathbb{Z}_{2}$. Let $Z^1,Z^2$ be the operators measuring $g^{1},g^{2}$, and let $X^{1}, X^{2}$ act as ${X^{1}\mid g^{1},g^{2}\rangle=\mid-g^{1},g^{2}\rangle}$
and ${X_{r}^{2}\mid g^{1},g^{2}\rangle}={\mid g^{1},-g^{2}\rangle}$
respectively. It is clear that $X^{1,2},Z^{1,2}$ behave like $\sigma^{x},\sigma^{z}$ Pauli-matrices. States with global $\zzt\times\zzt$ symmetry generators $\prod_r X^{1,2}_{r}$ have a $H^{2}\left(G,\text{U}\left(1\right)\right)=\mathbb{Z}_{2}$ classification from group cohomology -- hence there are two SPT fixed points, corresponding to the trivial paramagnet and SPT 

\begin{align}
H_{0} & =-\sum_{r=1}^{N}\left(h_{r}^{1}X_{r}^{1}+h_{r}^{2}X_{r}^{2}\right)\\
H_{1} & =-\sum_{r=2}^{N} h_{r}^{1}X_{r}^{1}Z_{r-1}^{2}Z_{r}^{2}- \sum_{r=1}^{N-1}h_{r}^{2}X_{r}^{2}Z_{r}^{1}Z_{r+1}^{1}\label{eq:z2xz2ham}
\end{align}
respectively. Both model Hamiltonians are sums of commuting operators. The l-bits for the trivial paramagnet are $\{X^{1}_r,X^{2}_r\}$ while those for the SPT are $\{X^{1}_r Z_{r-1}^{2} Z_{r}^{2},X^{2}_rZ_{r}^{1}Z_{r+1}^{1}\}$. As alluded to in \secref{s:algebraic}, we can fix the `bulk' conserved l-bits appearing in \eqnref{eq:z2xz2ham}, and ask how the global symmetry transformations act on the residual edge degrees of freedom $V(g)\rightarrow \widehat{g}_L \widehat{g}_R$ -- the forms of the generators projected onto this subspace with fixed l-bits is summarized in \tabref{tab:z2z2proj}. 

\begin{table}
\begin{tabular}{|c|c|c|c|c|}
\hline 
{Generators\textbackslash{}Bulk order} & \multicolumn{2}{c|}{$\text{\textbf{triv. pm}}$} & \multicolumn{2}{c|}{\textbf{spt}}\tabularnewline
\cline{2-5} 
 & L & R & L & R\tabularnewline
\hline 
$\widehat{(-1,1)}$ & $X_{1}^{1}$ & $X_{N}^{1}$ & $X_{1}^{1}Z_{1}^{2}$ & $Z_{N}^{2}$\tabularnewline
\hline 
$\widehat{(1,-1)}$ & $X_{1}^{2}$ & $X_{N}^{2}$ & $Z_{1}^{1}$ & $X_{N}^{2}Z_{N}^{1}$\tabularnewline
\hline 
\end{tabular}
\caption{This table summarizes the form of the global symmetry $\mathbb{Z}_2\times \mathbb{Z}_2$ generators projected onto a subspace with fixed bulk conserved quantities $V(g)\rightarrow \widehat{g}_L \widehat{g}_R$. In the SPT case, the generators at a particular edge ($L,R$) furnish a projective representation. The group $\mathbb{Z}_2\times \mathbb{Z}_2$ is presented as a multiplicative group on set $\{(\pm 1, \pm 1),(\pm 1, \mp 1)\}$.}\label{tab:z2z2proj}
\end{table}%

\subsection{MBL Binary drives realizing the Floquet phases}
Here we construct examples of drives for the eight putative Floquet phases. We use a three part drive of the form
\be\label{eq:trinary0}
U_{f}= e^{-it_{2} K_{2}} e^{-i t_{1} K_{1} }e^{-it_{0} K_{0}}\punc{,}
\ee
where $K_0$ is one of $H_0,H_1$, while $K_1, K_2$ are chosen from

\begin{align*}
H_{\text{FM}_1} & =-\sum_{r=1}^{N} Z^1_r Z^1_{r+1} \\
H_{\text{FM}_2} & =-\sum_{r=1}^{N} Z^2_r Z^2_{r+1}\punc{.}
\end{align*}

In $H_{0,1}$, we choose $h^{1,2}_{r}$ disordered with mean $1$. In all of the eight examples, we always set $t_{1,2} = 0$ or $\pi/2$. For these choices of $t_1,t_2$, we can use the identity $e^{- i \frac{\pi}{2} H_{\text{FM}_{1,2}}} \propto Z^{1,2}_{1} Z^{1,2}_N$  to show that
$$
U_{f}=v_L v_R e^{-it_{0} K_{0}}
$$ 
where (for instance) $v_{L}$ looks like one of 
$$
1,Z^{1}_1,Z^{2}_1,Z^{1}_1 Z^{2}_1
$$ 
The two possible choices of $K_{0}$ (trivial PM or SPT), along with the four choices of $v_L$ above, give the eight elements of the classification $\mathbb{Z}_2\times \mathbb{Z}_2 \times \mathbb{Z}_2$. By construction, the drive in question has a complete set of exactly local bulk conserved quantities (from $K_0$), and with some minor local symmetric changes of basis at the edge (similar to those below \eqnref{eq:classDlocun}) $v_{L,R}$ can be chosen to commute with $K_0$. Hence, the drive constructed gives a `fixed point' realization of the different Floquet classes predicted by our framework in \secref{s:generalities}. In the remainder of the section, we examine a selection of the eight constructed drives, and explain the structure of their eigenspectra. A detailed discussion of the edge states for all eight cases can be found in \appref{app:Z2Z2full}. 

\subsubsection{Undriven example: $t_{1,2}=0$}
In these cases the unitary is just
$$
U_{f}=e^{-it_{0} K_{0}}\punc{,}
$$ 
so the spectrum of $U_f$ is just the spectrum of $K_0$. Fixing the bulk l-bits we know that the symmetry action factorizes as $V(g)\rightarrow \widehat{g}_L \widehat{g}_R$, and the exponentially degenerate eigenspaces form representations of the algebra generated by $\widehat{g}_L, \widehat{g}_R$.

For $K_0$ trivial paramagnetic for instance, the states need only form a representation of the algebra generated by $\{ X^1_1,  X^2_1, X^1_N,  X^2_N \}$. As all of the elements of this algebra commute, representations of this algebra can be 1 dimensional. In physical terms there are no protected degeneracies at the edge of this system\cite{Kitaev11}.

On the other hand, for $K_0= H_1$, $\widehat{g}_L$ furnish a projective representation of the symmetry group -- those listed in \tabref{tab:z2z2proj}. Fixing the bulk integrals of motion, the residual edge degrees of freedom form some representation of the algebra generated by $\{\widehat{g}_L , \widehat{g}_R \}$ for all $g\in G$. In the present case the symmetry generators at a particular edge do not generally commute. This non-commutation of the symmetry generators implies that there is at least a protected a two-fold degeneracy associated with each edge -- indeed, in the present example, there is exactly a two-fold degeneracy associated with each edge. So the spectrum of the Floquet unitary $U_f$ has a spectral pairing, with every eigenstate being a part of a pair at the same quasi-energy.

\subsubsection{Non-trivial Floquet example}
\begin{table}
\caption{This table shows the eigenspectrum structure at the edge of the $\mathbb{Z}_2\times \mathbb{Z}_2$ SPT for a trivial bulk paramagnetic order, and $v_L = Z^1_1 Z^2_1$ i.e., $\kappa(g^1,g^2)=g^1 g^2$.}
\begin{tabular}{|c|c|c|}
\hline 
 & $U_f \propto v_{L}v_{R}$ & $ X^1_1  X^1_N $\tabularnewline
\hline 
$\mid\psi\rangle $ & $u$ & $u'$\tabularnewline
\hline 
$v_{L}\mid\psi\rangle $ & $u$ & $-u'$\tabularnewline
\hline 
$X^1_1   \mid\psi\rangle $ & $-u$ & $u'$\tabularnewline
\hline 
$X^1_1  v_{L} \mid\psi\rangle $ & $-u$ & $-u'$\tabularnewline
\hline 
\end{tabular}\label{tab:repthryexampleZ2xZ2}
\end{table}
In this example we set $t_{1,2}=\pi/2$ and $K_{1,2}=H_{\text{FM}_{1,2}}$ respectively, and $K_0 = H_0$. The resulting Floquet unitary is of the form
$$
U_f=v_L v_R  e^{- i t_0 H_0}
$$
Using a local symmetric change of basis we can rewrite this as 
$$
U_f=v_L v_R  e^{- i t_0 \sum^{N-1}_{s=2} h^1_s X^1_s + h^2_s X^2_s} 
$$
with $v_L= Z^1_1 Z^2_1$ and $v_R=Z^1_N Z^2_N$, which commute with the bulk l-bits $X^{1,2}_s$ for $s=2,\ldots, N-1$. Fixing the bulk l-bits, the Floquet unitary acts like 
$$
\propto v_L v_R
$$
on the edge degrees of freedom.  Looking at the commutation relations between the global symmetry generators and $v_L$ gives pumped charge $\kappa(g)= g^1 g^2$. As the bulk order is trivial paramagnetic, the symmetry operators at the edge look like $\{ X^1_1,  X^2_1, X^1_N,  X^2_N \}$ from \tabref{tab:z2z2proj}. Having fixed all the bulk l-bits, the remaining edge degrees of freedom form a representation of an algebra 
$$
\mathfrak{a}= \text{gen}\{X^1_1,  X^2_1, X^1_N,  X^2_N, v_L,v_R\}.
$$
As a result, we can show that the possible edge states come in quadruplets (two protected degrees of freedom at each edge). These quadruplets do not all lie at the same quasi-energy as was the case in the undriven case, but the quasi-energy spacings are protected. To see this, first note that $\mathfrak{a}$ has a centre $Z(\mathfrak{a})$ generated by $\{X^1_1  X^2_1, X^1_N  X^2_N\}$. As these operators commute with all of $\mathfrak{a}$ their eigenvalues can be fixed. This amounts to modding out the centre and considering the representations of the algebra $\mathfrak{a}'=\mathfrak{a}/Z(\mathfrak{a})\sim\text{gen}\{v_L,v_R,X^1_1,X^1_N\}$. 

To establish the representations of $\mathfrak{a}'$ it is helpful to (following Ref.~\onlinecite{Kitaev03}) identify a maximal commuting sub-algebra $\mathfrak{b}=\text{gen}\{v_L v_R, X^1_1 X^1_N\}$. We proceed by picking  a simultaneous eigenvector $\mid\psi\rangle$ of the generators of $\mathfrak{b}$ --  denote the corresponding eigenvalues $u,u'$ respectively. Consider acting on this state with the remaining elements of the algebra. We get at least four states with different eigenvalues in $\mathfrak{b}$. \tabref{tab:repthryexampleZ2xZ2} shows the distinct states arising from this procedure. We also record the eigenvalues of $U_f $ which simply act like $v_Lv_R$ in this eigenspace. In summary, fixing the bulk l-bits, we find there are four edge states (two at each edge)  spread evenly between two $U_f$ eigenvalues $u,-u$. Within each  $U_f$ eigenspace, there are two eigenstates distinguished by the $X^1_1  X^1_N$ eigenvalue, for instance. 
%
%

\begin{table*}[t]
\begin{tabular}{|c|c|c|c|c|c|c|c|c|}
\hline 
{Params.\textbackslash{}Class.} & $\text{\textbf{pm}},(1,1)$ & $\text{\textbf{pm}},(-1,1)$ & $\text{\textbf{pm}},(1,-1)$ & $\text{\textbf{pm}},(-1,-1)$ & $\text{\textbf{spt}},(1,1)$ & $\text{\textbf{spt}},(-1,1)$ & $\text{\textbf{spt}},(1,-1)$ & $\text{\textbf{spt}},(-1,-1)$\tabularnewline
 & $v_{L}=1$ & $v_{L}=Z^{1}$ & $v_{L}=Z^{2}$ & $v_{L}=Z^{1}Z^{2}$ & $v_{L}=1$ & $v_{L}=Z^{1}$ & $v_{L}=Z^{2}$ & $v_{L}=Z^{1}Z^{2}$\tabularnewline
\hline 
$(K_{0},t_{0})$ & $(H_{0},1)$ & $(H_{0},1)$ & $(H_{0},1)$ & $(H_{0},1)$ & $(H_{1},1)$ & $(H_{1},1)$ & $(H_{1},1)$ & $(H_{1},1)$\tabularnewline
\hline 
$(K_{1},t_{1})$ & - & $(H_{\text{FM}_{1}},\frac{\pi}{2})$ & $(H_{\text{FM}_{2}},\frac{\pi}{2})$ & $(H_{\text{FM}_{1}},\frac{\pi}{2})$ & - & $(H_{\text{FM}_{1}},\frac{\pi}{2})$ & $(H_{\text{FM}_{2}},\frac{\pi}{2})$ & $(H_{\text{FM}_{1}},\frac{\pi}{2})$\tabularnewline
\hline 
$(K_{2},t_{2})$ & - & - & - & $(H_{\text{FM}_{2}},\frac{\pi}{2})$ & - & - & - & $(H_{\text{FM}_{2}},\frac{\pi}{2})$\tabularnewline
\hline 
\end{tabular}
\caption{This table shows how to construct a drive \eqnref{eq:trinary0} with symmetry group $\mathbb{Z}_2 \times\mathbb{Z}$, corresponding to a prescribed Floquet phase. The Floquet phases are labelled by bulk eigenstate order (trivial paramagnet \textbf{pm} or SPT \textbf{spt}), and a pumped charge $\kappa$ which is uniquely determined by pair $(\kappa(-1,1),\kappa(1,-1))$. }\label{tab:z2z2drives}
\end{table*}%

\section{Anti-unitary examples}\label{s:BDI}
\begin{table}
\begin{tabular}{|c|c|c|c|c|}
\hline 
Drive\textbackslash{}$(\hat{\mathscr{T}}^{2},\eta)$ & $(1,1)$ & $(-1,1)$ & $(1,-1)$ & $(-1,-1)$\tabularnewline
\hline 
$(K_{0},t_{0})$ & $(H_{0},1)$ & $(H_{1},1)$ & $(H_{\text{FM}},1)$ & $(H_{\text{FM}},1)$\tabularnewline
\hline 
$(K_{1},t_{1})$ & - & - & $(H_{0},1)$ & $(H_{1},1)$\tabularnewline
\hline 
\end{tabular}
\caption{This table summarizes the drive parameters for \eqnref{eq:trinary} and their corresponding Floquet order. The Floquet order is summarized by $(\hat{\mathscr{T}}^{2},\eta)$ -- the first argument characterizes the bulk order by specifying the projective representation of time reversal at the edge, while $\eta$ captures how $\hat{\mathscr{T}}$ fails to commute with $U_f$.  }
\label{tab:z2T}
\end{table}

\begin{table}
\begin{tabular}{|c|c|c|c|c|}
\hline 
{Drive\textbackslash{}$(x,y,z)$} & $(x,1,1)$ & $(x,-1,1)$ & $(x,1,-1)$ & $(x,-1,-1)$\tabularnewline
 & $v_{L}=1$ & $v_{L}=\bar{\psi}_{1}$ & $v_{L}=Z_{1}$ & $v_{L}=Z_{1}\bar{\psi}_{1}$\tabularnewline
\hline 
$(K_{0},t_{0})$ & - & $(H^{B},\pi/2)$ & $(H^{A},\pi/2)$ & $(H^{A},\pi/2)$\tabularnewline
\hline 
$(K_{1},t_{1})$ & - & $(H^{C},\pi/2)$ & - & $(H^{B},\pi/2)$\tabularnewline
\hline 
$(K_{2},t_{2})$ & - & - & - & $(H^{C},\pi/2)$\tabularnewline
\hline 
$(K_{3},t_{3})$ & $(H_{x},\pi/4)$ & $(H_{x},\pi/4)$ & $(H_{x},\pi/4)$ & $(H_{x},\pi/4)$\tabularnewline
\hline 
\end{tabular}
\caption{BDI drives are conjectured to by classified by $(x,y,z)\in \mathbb{Z}_8\times \mathbb{Z}_2 \times \mathbb{Z}_2$. For a fixed bulk order $x$, the table gives drive parameters corresponding to any of the four choices of $(y,z)$, where $y=\kappa(P),z=\kappa(\mathscr{T})$ (see \secref{sss:BDIIsing}). The form of the corresponding edge unitaries $v_L$ is also shown.}\label{tab:BDIdrives}
\end{table}
In this section we grapple with SPT drives with time reversal symmetry. As such SPTs are not as well understood in the context of MBL and eigenstate order\cite{Potter15} our results in this section are more tentative, and based on the heuristic arguments in \secref{ss:antiunitary} and \secref{s:algebraic}. In this section, we will deal with two examples of SPTs with abelian symmetry groups with time reversal, namely $G=\mathbb{Z}^{\mathscr{T}}_2,\mathbb{Z}^{\mathscr{T}}_2\times \mathbb{Z}^{\text{fp}}_2$. We check that the arguments of \secref{ss:antiunitary} certainly apply to these two cases, so that the classifications are of form $\mathbb{Z}_2,\mathbb{Z}_8\times\mathbb{Z}_2\times \mathbb{Z}_2$ respectively as predicted, and give examples of drives which should fall into each putative Floquet phase. We highlight in particular the $G=\mathbb{Z}^{\mathscr{T}}_2\times \mathbb{Z}^{\text{fp}}_2$  case, which is a Fermionic system with time reversal symmetry (called `BDI' in the free fermion context). Our results show that the free fermion classification of the BDI Floquet classes breaks down from $\mathbb{Z}\times \mathbb{Z}$ (see Refs.~\onlinecite{Nathan15,Roy15}) to  $\mathbb{Z}_8\times \mathbb{Z}_2\times  \mathbb{Z}_2$ in the presence of interaction  in a manner similar to that seen in the undriven setting\cite{Kitaev11}.

\subsection{$G=\mathbb{Z}^{\mathscr{T}}_2$, bosonic system}\label{s:z2TI}
First consider a bulk SPT phase with just $\mathbb{Z}^{\mathscr{T}}_2$ symmetry and $\mathscr{T}^2=1$. We have not found an explicit discussion of such phases in the literature, although the relevant cohomology calculation is found in Ref.~\onlinecite{Chen11}. First construct the undriven SPT states. Consider a system with an Ising $Z_r=\pm1$ degree of freedom on each site, and a symmetry $\prod_r X_r K$ where $K$ is complex conjugation. An example of such a Hamiltonian
\be\label{eq:H0Z2T}
H_0 = -\sum_r  h_r X_r 
\ee
has paramagnetic order and no symmetry protected edge state. On the other hand
\be\label{eq:H1Z2T}
H_1 = -\sum_r  h_r X_r Z_{r-1} Z_{r+1}  
\ee
has edge states which transform according to a projective representation $\hat{\mathscr{T}}_L^2 = -1$\footnote{Note that a clean variant of \eqnref{eq:H1Z2T} emerges in the high frequency expansion of the interacting drives considered in Ref.~\onlinecite{Iadecola15}.}. Note that $H_0,H_1$ are commuting stabilizer Hamiltonians, and the local integrals of motion $X_r, X_r Z_{r-1} Z_{r+1}$ commute with $\mathscr{T}$ while taking values $\pm 1$, so that the arguments of \secref{ss:antiunitary} apply.  According to that discussion, and to the discussion in \secref{s:algebraic}, there will be just two possible Floquet phases for a given bulk order, distinguished by the pumped charges $\kappa(\mathscr{T})=\pm 1$. This pumped charge in turn determines the commutation relations between $\hat{\mathscr{T}}_L$ and the Floquet unitary restricted to the edge subspace $(\hat{\mathscr{T}}_L:v_L)=\pm 1$.  

Here we claim to construct examples of the four Floquet pumps using trinary drives. It is useful to define an auxiliary ferromagnetic drive

\be\label{eq:FMZ2T}
H_{\text{FM}} = -\sum_r  Z_{r} Z_{r+1}  .
\ee
Consider Floquet pumps of the form
\be\label{eq:trinary}
U_{f}=e^{-i\frac{t_{0}}{2}K_{0} }e^{-it_{1} K_{1}} e^{-i\frac{t_{0}}{2}K_{0} }.
\ee
All we need to do is specify $t_0,t_1$ and $K_{0,1}$ are either \eqnref{eq:H0Z2T} or \eqnref{eq:H1Z2T}, where we choose $h_r$ to be say log-normal distributed with mean $1$. The possible classes of drives will be labelled by $(\hat{\mathscr{T}}^2,\eta)$ where $\hat{\mathscr{T}}^2=\pm 1$ determine the SPT cocycle and hence the bulk order, while $\eta = \kappa(\mathscr{T})= (\mathscr{T}:v_L)$ as discussed above. \tabref{tab:z2T} summarizes which Hamiltonians need to be chosen for a given Floquet phase $(\hat{\mathscr{T}}^2,\eta)$. 

\subsection{Interacting BDI drives i.e., $G=\mathbb{Z}^{\text{fp}}_2\times \mathbb{Z}^{\mathscr{T}}_2$ }
Here we put forward a tentative classification of interacting 1d `Class BDI' MBL Floquet drives. The symmetry group is $G=\mathbb{Z}^{\text{fp}}_2\times \mathbb{Z}^{\mathscr{T}}_2$ where time reversal obeys $\mathscr{T}^{2}=1$ on the fundamental fermions. The undriven problem has a $\mathbb{Z}_{8}$ classification\cite{Kitaev11}. From \secref{s:algebraic}, we expect the Floquet drives to have a $\mathbb{Z}_{8}\times\mathbb{Z}_{2} \times \mathbb{Z}_{2}$ classification, in contrast to the $\mathbb{Z}\times\mathbb{Z}$ classification found in the non-interaction band theory picture \cite{Nathan15,Gannot15}.  We now attempt to explain this collapse in classification using some example drives. First in \secref{sss:stacking} we consider stacking a number of the (clean) class D drives considered in \secref{s:motivatingexample}. Then in \secref{sss:BDIIsing} we use another realization of the same SPT order and Floquet phase involving a single Majorana chain coupled to additional Ising degrees of freedom.

\subsubsection{Stacking argument}\label{sss:stacking}
In this section we get a more concrete feel for how the $\mathcal{A}=\mathbb{Z}_{2}\times\mathbb{Z}_{2}$
part of the Floquet classification comes about by stacking many class D time reversal symmetric drives. While the stacked models we consider will have many extraneous bulk degrees of freedom, it allows us to extract useful intuition. Consider a drive with
$k$ Kitaev Majorana chains, and with net Floquet unitary

\be\label{eq:UfBDI}
U_{f}=e^{-i\frac{t_{0}^{\alpha}}{2}H_{0}^{(\alpha)}}e^{-it_{1}^{\alpha}H_{1}^{(\alpha)}}e^{-i\frac{t_{0}^{\alpha}}{2}H_{0}^{(\alpha)}}
\ee
 $\alpha=1,\ldots,k$ label chains and

\begin{align*}
H_{0}^{(\alpha)}= & -\sum_{r=1}^{N}i\bar{\psi}_{r}^{\alpha}\psi_{r}^{\alpha}\\
H_{1}^{(\alpha)}= & -\sum_{r=1}^{N-1}i\psi_{r}^{\alpha}\bar{\psi}_{r+1}^{\alpha}.
\end{align*}

Note that the local conserved quantities in each of these Hamiltonians take values in $\pm 1$ and commute with time reversal symmetry as per the requirements of \secref{ss:antiunitary}. The Majoranas are such that $ T\psi^{\alpha}_r T=\psi^{\alpha}_r$ and $\mathscr{T}\psib^{\alpha}_r T=-\psib^{\alpha}_r$.

 To obtain a drive with $n_{L}^{0}$ zero quasi-energy Majoranas and
$n_{L}^{\pi}$ $\pi$ quasi-energy Majoranas at (say) the left edge,
set $k=n_{L}^{0}+n_{L}^{\pi}$ and set $t_{1}^{\alpha=1,\ldots,n_{L}^{\pi}}=\frac{\pi}{2}+\epsilon$,
$t_{0}^{\alpha=1,\ldots,n_{L}^{\pi}}=\frac{\pi}{2}$ and $t_{1}^{\alpha=n_{L}^{\pi}+1,\ldots,k}=\epsilon$
and $t_{0}^{\alpha=n_{L}^{\pi}+1,\ldots,k}=0$ where say $0<\epsilon<1$
(the specific value is unimportant). The resulting Floquet unitary
is

\begin{align*}
U_{f} & =e^{-i\frac{t_{1}^{\alpha}}{2}H_{1}^{(\alpha)}}e^{-it_{0}^{\alpha}H_{0}^{(\alpha)}}e^{-i\frac{t_{1}^{\alpha}}{2}H_{1}^{(\alpha)}}\\
 & =e^{-i\frac{t_{1}^{\alpha}}{2}H_{1}^{(\alpha)}}\prod_{\alpha>n_{L}^{\pi}}P_{\alpha}e^{-i\frac{t_{1}^{\alpha}}{2}H_{1}^{(\alpha)}}\\
 & =\prod_{\alpha>n_{L}^{\pi}}P_{\alpha}e^{-it_{1}^{\alpha}H_{1}^{(\alpha)}}\\
 & \propto\prod_{\alpha<n_{L}^{\pi}}\bar{\psi}_{1}^{\alpha}\prod_{\alpha<n_{L}^{\pi}}\psi_{N}^{\alpha}e^{-i\epsilon\sum_{\alpha}H_{1}^{(\alpha)}}
\end{align*}
Note that $U_{f} \bar{\psi}_{1}^{\alpha} U_{f}^{-1}=(-1)^{(\alpha<n_{L}^{\pi})}\bar{\psi}_{1}^{\alpha}$
so that $\bar{\psi}_{1}^{\alpha=1,\ldots,n_{L}^{\pi}}$ are $\pi$
Majoranas and $\bar{\psi}_{1}^{\alpha=n_{L}^{\pi}+1,\ldots,k}$ are
zero Majoranas.  

Consider a drive with bulk classification $m\in\mathbb{Z}_{8}=\{0,1,2,\ldots7\}$.
By the eightfold Kitaev-Fidkowski classification, we may as well choose
the above drive with any $k=m+8n$ -- it is convenient for our purposes
to choose $k=m+8$ so that there are at least $8$ chains present.
We will find that the properties of $v_{L}$ as a function of Majoranas comprising $v_L$, namely $l\equiv n_{L}^{\pi}$, depend only on $l$ modulo $4$. 

First note that if $l=4$, $v_{L}=\bar{\psi}_{1}^{1}\bar{\psi}_{1}^{2}\bar{\psi}_{1}^{3}\bar{\psi}_{1}^{4}=:\bar{\psi}_{1}^{(1234)}$. Note this term can be removed from $U_{f}$ by extending the old Floquet drive by a local term
\be\label{eq:fourfold}
U_f \rightarrow U'_f=e^{-\frac{i\pi}{4}t\bar{\psi}_{1}^{(1234)}} U_f e^{-\frac{i\pi}{4}t\bar{\psi}_{1}^{(1234)}}
\ee
The resulting drive $U'$ is still time reversal invariant, and in particular $(V(\mathscr{T}):U'_f)=1$. The modification also respects local parity symmetric. The same thing can be done at the right hand edge. Hence, we have locally and symmetrically modified the Floquet drive to obtain

\[
U'_{f}=e^{-i\epsilon\sum_{\alpha=1}^{m+8}H_{1}^{(\alpha)}}.
\]
This Floquet unitary can clearly be engineered using a time independent
Hamiltonian drive with bulk $m+8$ eigenstate order, on a system
with boundary. Hence, when it comes to robust eigenstate properties, the
$l=4$ drive should be considered the same as the $l=0$ drive with the same bulk order. Using
this style of argument, it readily follows that the robust physical
properties of a drive of form \eqnref{eq:UfBDI} with any $m$ should depend only on $l\mod4$. 

Let us now show that $l=0,1,2,3$ have distinct physical properties.
Note first that $v_{L}\propto\bar{\psi}^{(12)}$ cannot be removed
in the above manner. This follows from $(V(\mathscr{T}):\bar{\psi}_{1}^{(12)})=(V(\mathscr{T}):\psi_{N}^{(12)})=-1$.
If we modify $U_f$ using some unitary $w_{L}$ acting on the
left edge Majoranas, then the new Floquet unitary $U'_{f}$ must have
the time reversal property
\begin{align}
1&=(V(\mathscr{T}):U'_{f})\nonumber\\
&=V(\mathscr{T})v_{L}v_{R}w_{L}V(\mathscr{T})^{-1}v_{L}v_{R}w_{L}\nonumber\\
&=-V(\mathscr{T})v_{L}w_{L}V(\mathscr{T})^{-1}v_{L}w_{L}\label{eq:BDIinter}.
\end{align}
where the last equality follows from the fact that $v_R$ has the same time reversal property as $v_L$, namely  $(V(\mathscr{T}): v_{L,R})=-1$. Were it possible to completely remove $v_{L}$ with such a $w_{L}$, then $V(\mathscr{T})v_{L}w_{L}V(\mathscr{T})^{-1}v_{L}w_{L}=1$. But this is inconsistent with \eqnref{eq:BDIinter}. 

The drives with $v_{L}=\bar{\psi}^{(1)}$ are clearly non-trivial
(from the class D part of the paper) because $v_{L}$ is fermion parity
odd. What remains however, is to show that $v_{L}=\bar{\psi}^{(1)}$
is distinct from $v_{L}=\bar{\psi}^{(123)}$. In fact this follows
readily using the above method: Note that $(V(\mathscr{T}):\bar{\psi}^{(1)})=-1$
while $(V(\mathscr{T}):\bar{\psi}^{(123)})=1$. The different time reversal properties
of these potential $v_{L}$ mean they cannot be locally tuned to one
another while preserving time reversal invariance.  In summary, fixing bulk class $m$, there appear to be four distinct drives $l=0,1,2,3$ labelled by the four combinations of $\mathbb{Z}_{2}$ numbers $(V(P):v_{L}),(V(\mathscr{T}):v_{L})=\pm1$. Thus the different Floquet drives appear to be labelled by elements of $\mathbb{Z}_8\times \mathbb{Z}_2 \times \mathbb{Z}_2$. Note that as sets (though not as groups) $\mathbb{Z}_2\times \mathbb{Z}_2$ is equivalent to $\mathbb{Z}_4$, so we could also say that the Floquet phases lie in the set $\mathbb{Z}_8\times \mathbb{Z}_4$. This latter presentation is preferable if one wishes the classification group to reflect the fourfold (see \eqnref{eq:fourfold}) manner in which the pumped charge changes as we stack multiple Floquet systems atop one another. In other words one can view different Floquet phases as forming an abelian group  $\mathbb{Z}_8\oplus \mathbb{Z}_4$, with addition corresponding to taking a tensor product of systems.  

\subsubsection{Alternative setup}\label{sss:BDIIsing}
Consider a chain with and onsite Hilbert space consisting of $\psi,\psib$ majoranas, as well as a $\mathbb{Z}_2$ degree of freedom $Z_r=\pm1$ where $Z_r$ a Pauli-matrix. Let time reversal act like $\mathscr{T}=\prod_{r} X_r K$ where $K$ is complex conjugation. Then the Majorana fermions have the usual time reversal transformations, and $\mathscr{T}^2=1$. It can be verified (although we have not found an appropriate reference) that the following Hamiltonians capture the eight possible MBL phases of Fidkowski and Kitaev\cite{Kitaev11} with $G=\mathbb{Z}^{\text{fp}}_2\times \mathbb{Z}^{\mathscr{T}}_2$ symmetry
\begin{align}
H_0 &= \sum_r h^{(0)}_r i\psi_r\psib_r +  h^{(1)}_r  X_r\nonumber\\
H_1 &= \sum_r h^{(0)}_r i  \psib_r \psi_{r+1} +  h^{(1)}_r X_r\nonumber\\
H_2 &= \sum_r h^{(0)}_r i   \psib_r \psi_{r+1}X_r+  h^{(1)}_r  i\psi_r \psib_{r+1}X_r Z_{r-1} Z_{r+1}  \nonumber\\
H_3 &=\sum_r h^{(0)}_r i  \psi_r \psib_{r+1} +  h^{(1)}_r X_r Z_{r-1} Z_{r+1}\nonumber\\
H_4 &= \sum_r h^{(0)}_r i   \psi_r \psib_r+  h^{(1)}_r X_rZ_{r-1} Z_{r+1}\nonumber\\
H_5 &= \sum_r h^{(0)}_r i  \psib_r \psi_{r+1} +  h^{(1)}_r  X_r Z_{r-1} Z_{r+1}\nonumber\\
H_6 &= \sum_r h^{(0)}_r i   \psi_r \psib_{r+1}X_r+  h^{(1)}_r  i\psib_r \psi_{r+1}X_r Z_{r-1} Z_{r+1}  \nonumber\\
H_7 &= \sum_r h^{(0)}_r i  \psi_r \psib_{r+1} +  h^{(1)}_r X_r.\nonumber\\
\end{align}
These are all commuting stabilizer Hamiltonians, each l-bit taking values $\pm 1$, and the stabilizers commute with both fermion parity symmetry and time reversal, so they obey the conditions discussed in \secref{ss:antiunitary}. To get any of the $\mathbb{Z}_8\times\mathbb{Z}_2\times\mathbb{Z}_2$ worth of Floquet phases it pays to consider three auxiliary Hamiltonians

\begin{align}
H^{A}&= \sum_r   Z_r Z_{r+1}  \nonumber\\
H^{B}&= \sum_r   i\psi_r\psib_r \nonumber\\
H^{C}&= \sum_r  i\psi_r\psib_{r+1}, \nonumber\\
\end{align}
and drives of the form 

\begin{align}
U_f  &= e^{- i t_0 K_0/2}  e^{- i t_1 K_1/2}  e^{- i t_2 K_2/2}     e^{- i t_3 K_3}   \nonumber\\ 	
	&\times e^{- i t_2 K_2/2}   e^{- i t_1 K_1/2}   e^{- i t_0 K_0/2}
\end{align}
To obtain a drive $(x,y,z)\in \mathbb{Z}_8 \times \mathbb{Z}_2 \times \mathbb{Z}_2$ first pick $K_3=H_x$, the Hamiltonian with bulk order `$x$'. For such a fixed choice of $x$, the various choices of $K_i$ for the four possible $(y,z)$ are summarized in \tabref{tab:BDIdrives}, as are the corresponding forms of the $v_L$. In the language of \secref{ss:antiunitary}, the four possible $(y,z)$ correspond to the four possible twisted representations, with $(y,z)=(\kappa(P),\kappa(\mathscr{T}))$.  As before, for all the Hamiltonians involved in the above drives, we will ensure $h_r$ is say log-normal distributed with mean $1$. All of the drives so constructed are `fixed-point' in the sense that they have a complete set of exactly local integrals of the motion in the bulk.

\section{Concluding remarks}\label{s:conclusion}
We have put forward a classification scheme for many-body localized Floquet SPT states in one spatial dimension with finite unitary on-site symmetries. In our scheme Floquet drives are classified by $\text{Cl}_G\times\mathcal{A}_G$ where Cl$_G$ is the non-driven SPT classification, and $\mathcal{A}_G$ is a set of 1d representations of $G$ (i.e., $H^{1}(G,\text{U}(1))$. We have also tentatively extended these methods to cases with $\mathscr{T}^2=1$ time reversal for which  $G=G'\times \mathbb{Z}^{\mathscr{T}}_2$ where $G'$ is unitary. In these cases the classification is again of form $\text{Cl}_G\times\mathcal{A}_G$, but $\mathcal{A}_G =H^{1}(G',\text{U}(1))\times \mathbb{Z}_2$. In addition, we have given examples of idealized drives which realize the predicted putative Floquet phases. 

The current work can be extended in several directions. There is the possibility of investigating driven analogues of disordered anyon chains\cite{Vasseur15}. In a sequel to this work \cite{vonKeyserlingkSondhi16b}, we use a similar toolkit to classify the possible symmetry broken Floquet phases 
in 1d; due to localization such order can indeed be observed in apparent violation of the standard theorems on broken symmetry and
dimensionality. The extension of these results to higher dimensions is a fit subject for study,  especially given recent questions over the existence of MBL phases in $d>1$ . Also left to future work is the detailed connection
between the edge-based classification used in this paper and the bulk diagnostics used in Ref.~\onlinecite{Khemani15}. Finally there is
the challenge of understanding the dynamical stability of these new phases for realistic drives en route to proposals for realizing
and detecting them in experiments.

\textit{Note:} Three closely related independent works appeared shortly after we posted this manuscript\cite{Else16,Potter16,Harper16}. The first two of these references phrase the classification in terms of the second cohomology $H^2(G \rtimes \mathbb{Z},\text{U}(1))$ where $G$ is the global on-site symmetry group and $\mathbb{Z}$ is to be identified with time translation by one Floquet period. This interpretation is similar to that in our discussion in \secref{s:algebraic} -- indeed, the operators $v_{L,R}$ can be thought of as time translations local to the $L,R$ edges respectively.  With this interpretation in mind,  \secref{s:algebraic} establishes how time-translation acts together with the other symmetries at the edge of the system. This is in fact the same thing as calculating the projective representations of the total symmetry group $G \rtimes \mathbb{Z}$ -- that is, calculating $H^2(G \rtimes \mathbb{Z},\text{U}(1))$.

\acknowledgements
We thank V. Khemani, R. Moessner and A. Lazarides for many discussions and for collaboration (with SLS) on prior work. We would
also like to thank R. Roy for generously sharing his unpublished work on the topological classification of free fermion drives. We thank A.~Potter for alerting us to his work\cite{Potter16}. CVK is supported by the Princeton Center for Theoretical Science. SLS would like to acknowledge support from the NSF-DMR via Grant No.~1311781 and the Alexander von Humboldt Foundation for support during a stay at MPI-PKS where this work was begun.

\begin{appendix}

\section{Locality of $f$}\label{app:flocal}
 Consider a Floquet unitary $U_f = e^{- i f}$ on a system without boundary, with a complete set of local integrals of motion.  Then $f$ may be written as a function of these local conserved quantities $f=f(\{\Gamma_r\})$. We now give a sketch of an argument that $f$ may be chosen to be a  local function of these conserved quantities.  We use the fact that $U_f$ is \textit{a priori} the result of a local unitary evolution. Local unitary evolutions may be well approximated by finite depth quantum circuits\cite{Chen10} -- for simplicity let us assume that $U_f$ is a binary quantum circuit of depth $d$ (e.g., \figref{circuit}), where $d$ remains finite in the thermodynamic limit.

As a warmup, we argue that for local conserved quantities $\Gamma_x, \Gamma_y$ separated in excess of distance $d$, and well in excess of the size of the local conserved quantities $\sim \xi$,  the Floquet unitary can be written as $U_f = A B$ where $A$ depends on $\Gamma_x$ but not $\Gamma_y$, and  $B$ depends on $\Gamma_y$ but not $\Gamma_x$. 

Consider a set of local conserved quantities associated with site $x$. These quantities commute with one another, so we can find a simultaneous eigenbasis for the local Hilbert space. Label the distinct possible lists of simultaneous eigenvalues by integers $\lambda \in \{1,2,\ldots, l \}$.  There may be degeneracies, so the eigenvectors corresponding to $\lambda$ are of form $E_{\lambda} = \{ v_{\lambda,\alpha_1},\ldots , v_{\lambda, \deg_\lambda}\}$. Let $S^\sigma_x$ be a unitary which permutes all of the eigenvectors $\cup_\lambda E_\lambda $ according to some permutation cycle $\sigma$.  As $\sigma$ permutes eigenvectors, it may also permute local eigenvalues through an action we denote $\lambda \rightarrow \sigma(\lambda)$.

$S^\sigma_x$ is a local operator because it only permutes some eigenvectors in the local Hilbert space. As $S^\sigma_x$  is local to $x$ and $U_f$ is low depth, the commutator $[U_f:S^\sigma_x]$ is local to $x$ (this is a Lieb-Robinson type bound -- except in the quantum circuit formalism there is no exponentially decaying tail). If $S^\sigma_x$  was $\xi$ exponentially localized before, then the commutator is safely localized around $x$ certainly when considering distances much larger than $\xi$ and at least $d$! We call this the `smearing' length scale $\zeta$. 

If $|x-y|>\zeta$, then operators localized around $y$ should commute with $[U_f:S^\sigma_x]$ i.e., 
\be\label{eq:doublecomm}
[[U_f:S^\sigma_x]:S^\tau_y] \sim 1
\ee
where $\tau$ is any permutation. Consider components of this equation in the eigenbasis of local conserved quantities. $U_f$ depends on all the local conserved quantities in general, but we concentrate on the dependence of those conserved quantities near $x,y$, writing $U_f = U_f (\lambda_x, \lambda_y)$ where $\lambda_{x,y}\in \{1,\ldots, l \}$, and suppressing the other labels for now. The equation \eqnref{eq:doublecomm} reads

$$
\frac{U_f (\lambda_x, \lambda_y) U_f (\sigma(\lambda_x), \tau(\lambda_y))}{U_f (\sigma(\lambda_x), \lambda_y) U_f (\lambda_x, \tau(\lambda_y) )} =1.
$$
Pick $\sigma,\tau$ such that $\sigma(\lambda_x)=1$ and   $\tau(\lambda_y)=1$ , and rearrange to find

\be \label{eq:Ufanz}
U_f (\lambda_x, \lambda_y) = U_f (\lambda_x, 1 ) \times\frac{U_f (1, \lambda_y) }{ U_f (1,1)}  \punc{.}
\ee
The first factor on the right hand side depends on $\lambda_x$ but not $\lambda_y$ while the second depends on  $\lambda_y$ but not $\lambda_x$ as required.

To argue that $f$ can be chosen to be local,  concentrate on the factor $U_f (\lambda_x, 1 )$. Now this implicitly depends on the values of other conserved quantities. Using the same reasoning as above and our previous result \eqnref{eq:Ufanz}, we can factorize out the dependence of any $\lambda_z$ for $|z-x|>\zeta$, namely (again suppressing dependence on other conserved quantities)

$$
U_f (\lambda_x, \lambda_y, \lambda_z) = {U_f (\lambda_x, 1,1 )} \times \frac{U_f (1, \lambda_y,\lambda_z) }{ U_f (1,1,1)}  \punc{.} $$
where the first term  does not depend on $\lambda_{y,z}$ , while the last two terms do not depend on $\lambda_x$. We can proceed inductively to show that
$$
U_f  = \underbrace{U(\lambda_x, \lambda_{\text{far} } =1  )}_{A_x}  \underbrace{\frac{U(\lambda_x = 1, \lambda_{\text{far}} )}{U(\lambda_x = 1, \lambda_{\text{far}} =1 ) }}_{B}
$$ 
where $\lambda_{\text{far}}$ are those labels at sites further than $\zeta$ from $x$, other labels are kept implicit, $A_x$ depends on $\lambda_x$ and only conserved quantities within $\zeta$ of $x$, while $B$ does not depend on $\lambda_x$. In particular, $U_f  A^{-1}_x (=B) $ is a local unitary which does not depend on $\lambda_x$. Moreover, one can verify for any site $x'$ that $[B:S^{\sigma}_{x'}]$ is localized to within the same length scale $\zeta$ of $x'$ -- with some thought,  this follows from the fact that $U_f$ has smearing scale $\zeta$, and that $B$ is a product of $U_f$ factors with different combinations of conserved quantities set to $1$. With this new site $x' \neq x$, repeat the procedure above (but using $U_f  A^{-1}_x$ instead of $U_f$)  to get

$$
U_f  A^{-1}_x = A_{x'}  B' 
$$
where $A_{x'}$ depends only on conserved quantities near to $x'$, and $B'$ is a local unitary which does not depend on $\lambda_{x},\lambda_{x'}$ and which also has smearing scale bounded by $\zeta$. Once can repeat this process inductively, cycling through all sites until eventually we find
$$
U_f   \propto \prod_{x} A_{x}
$$
where  $A_x$ is a U$(1)$ function of conserved quantities within $\zeta$ of $x$, so can be written $e^{-i f_x}$ where $f_x$ is a real function of conserved quantities within $\zeta$ of $x$. As all the conserved quantities commute we have

$$
U_f  = e^{-i \sum_x  f_x}
$$
up to a global U$(1)$ phase. We see that $f$ so defined here is at most a `$k$-local' Hamiltonian where $k = \zeta$. The argument above is clearly non-rigorous. We require a more careful analysis of the importance of exponentially small corrections in the iterative procedure outlined above.

\section{Form of $U_f$}\label{app:misclemmas}
In this section, we first provide supporting arguments for {\bf (i)} in the main text in \secref{ss:provei} -- our arguments will rely heavily on the results of \appref{app:flocal}. Then in \secref{eq:restrictionlemmas} we consider starting with a Floquet drive on a closed system, and show that upon restricting the Floquet drive to a subsystem, the resulting unitary also takes the canonical form \eqnref{eq:decomprough}. 

\subsection{Floquet MBL unitaries on systems with boundary}\label{ss:provei}
\begin{figure}[h]
\includegraphics[width=0.99\columnwidth]{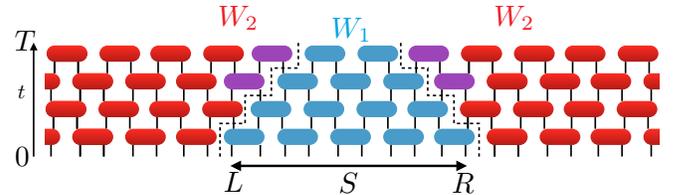}
\caption{(Color Online): Shows a circuit diagram for local unitary $U(T)$. We identify parts of the circuit $W_{1}$ (blue) and $W_2$ (purple and red).   }
\label{circuit}
\end{figure}
The main goal of this section is to support the claim {\bf(i)} in the main text, that Floquet unitaries with a complete bulk MBL order, can be put into the form \eqnref{eq:decomprough}. We assume that $U_f$ is a local unitary on a system $[L,R]$, with a complete set of l-bits in the bulk. Deep in the bulk, we know that $U_f$ is depends only on the bulk l-bits. Moreover using the results of \appref{app:flocal}
$$
U_f = \mathcal{O}_{L,R} e^{- i h}
$$
where $h$ is a local functional of the bulk l-bits, and $\mathcal{O}_{L,R}$ is some unitary acting near the boundary $L,R$ of the system. However, as $U_f$ is a local unitary circuit, and $L,R$ are very distant from one another, it follows that  $\mathcal{O}_{L,R}$ must factorize as $\mathcal{O}_{L} \mathcal{O}_{R} $
$$
U_f = \mathcal{O}_{L} \mathcal{O}_{R}  e^{- i h}
$$
where $\mathcal{O}_{L,R}$ are local to the left/right part of the system respectively. Consider a region $S'=[L+a, R-a]$ where $a$ is much larger than the support of $\mathcal{O}_{L,R}$ as well as the typical size $\xi$ of the conserved quantities. We can write
\be\label{eq:redefh}
h = h_{S'} + g_L + g_R
\ee
where  $h_{S'}$ are all those terms in $h$ which involve only conserved quantities in $S'$. While by locality $g_{L,R}$ involve only conserved quantities on the LHS/RHS of the system respectively (up to the exponentially small corrections mentioned in \appref{app:flocal}). Note that all three terms on the RHS of \eqnref{eq:redefh} commute with one another because they only involve the conserved quantities. Note too that $\mathcal{O}_{L}, \mathcal{O}_{R}$ commute with $h_{S'}$ because as operators they have disjoint support (and $h$ always fermion parity even). From the discussion of $g_L,g_R$ above, $v_{L,R} \equiv \mathcal{O}_{L,R} e^{- i g_{L,R}} $ will commute with $f \equiv h_{S'}$  therefore 
 $$
 U_f=v_L v_R e^{-i f(\Gamma)}
 $$
where $v_L,v_R$ commute with $f$, and indeed all conserved quantities with support in the `bulk' $S'$, as required. Having provided arguments supporting {\bf (i)}, we now give a method for deciding whether or not a Floquet drive defined on a system without boundary is in a trivial or non-trivial class.

\subsection{Characterizing Floquet MBL unitaries on systems without boundary}\label{eq:restrictionlemmas}
We start with a definition. 
\begin{definition}\label{def:restriction}
Given a many-body unitary evolution $U(t)=\mathcal{T}(e^{-i \int^{t}_{0} dt' H(t') } )$ with $H(t')$ a family of local bounded Hamiltonians on a closed system, we define the restricted unitary $U_{S}(t)\equiv\mathcal{T} (e^{-i \int^{t}_{0} dt' H_{S}(t') } )$ where $H_{S}(t')$ are those terms in the Hamiltonian acting exclusively on subsystem $S$.
\end{definition}

We will show that if one takes a unitary circuit with full bulk MBL order on a manifold without boundary, and restrict to a system with boundary, the resulting unitary $U_S$, can be put into the desired form \eqnref{eq:decomprough}. We will make use of the arguments in the previous section which showed that on a closed system, $U_f = e^{- i f}$ where $f$ is a local functional of bulk conserved quantities. First, a useful technical lemma.

\begin{lemma}\label{lem:vLvRlemma}
Consider a local unitary circuit $W(t)$ of depth $d$. If $W(T)=1$ on a closed system, then restricting to subsystem $S=[L,R]$ we find $W_S(T) = v_L v_R$ where $v_L,v_R$ are unitaries localized within $d$ of $L,R$. 
\end{lemma}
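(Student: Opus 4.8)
The plan is to exploit the finite depth of the circuit $W(t)$ together with the hypothesis $W(T)=1$ on the closed system, and to track how the restriction to $S=[L,R]$ fails to reproduce the identity only near the two cut points. First I would fix notation: write $W(T)$ as a product of $d$ layers $W^{(d)}\cdots W^{(1)}$, each layer a tensor product of local gates of bounded range, and let $W_S(T)$ be the corresponding product where in each layer we discard every gate that is not supported entirely inside $S$. The key geometric observation is that a gate straddling the boundary of $S$ (i.e. acting on sites both inside and outside) lies within the lightcone distance $d$ of either $L$ or $R$; all gates deep in the bulk of $S$ are retained identically in both $W(T)$ and $W_S(T)$.

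The main step is a ``bulk agreement'' argument. Consider the region $S'=[L+d, R-d]$. I claim $W_S(T)$ and $W(T)$, when conjugating or acting on operators supported in $S'$, do the same thing — more precisely, $W_S(T) W(T)^{-1}$ is supported within distance $d$ of $\partial S$. One clean way to see this: build an interpolating family by adding back, one at a time, the boundary-straddling and exterior gates that were deleted; each addition multiplies by a unitary supported within $d$ of $L$ or within $d$ of $R$ (for exterior-but-far gates one uses that they commute through to cancel, since $W(T)=1$ means the full ordered product collapses). Since $W(T)=1$, we conclude $W_S(T) = A_L A_R$ where $A_L$ is a product of unitaries each supported within $d$ of $L$ and $A_R$ likewise for $R$. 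Because $L$ and $R$ are separated by much more than $2d$, the factors $A_L$ and $A_R$ have disjoint support and hence commute; setting $v_L:=A_L$, $v_R:=A_R$ gives the claimed form, with each $v_{L,R}$ a unitary localized within $d$ of the respective endpoint.

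The subtle point — and I expect this to be the main obstacle — is making precise the claim that deleting the exterior and boundary gates really does factor as (something near $L$)$\times$(something near $R$) and nothing else. Naively the deleted exterior gates form a complicated operator on the complement of $S$, but one must use that in the \emph{full} circuit these, together with all the retained gates, multiply to the identity, so that $W_S(T)$ equals the inverse of ``everything we threw away, conjugated appropriately'' — and then a Lieb--Robinson/lightcone bookkeeping shows that after cancellation only a width-$d$ collar around each of $L$ and $R$ survives. Carrying this bookkeeping out carefully, keeping track of the order of the layers and which gates commute past which, is the technical heart of the argument; once that is in hand, the disjoint-support commuting step and the final relabeling are immediate. (For a Floquet unitary $U_f$ that is only approximated by a finite-depth circuit, the same argument runs with exponentially small corrections, as already flagged in \appref{app:flocal}.)
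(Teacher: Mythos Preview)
Your conjugation claim --- that $W_S(T)$ and $W(T)$ act identically when conjugating operators supported in $S'=[L+d,R-d]$ --- is correct and already suffices: since $W(T)=1$, it follows that $W_S(T)$ commutes with every operator in $S'$, hence is supported on $[L,L+d)\cup(R-d,R]$, and the factorization $v_L v_R$ is immediate by disjointness. But your proposed justification of that claim via an ``interpolating family'' is where the gap lies. Inserting a deleted gate $g$ back into layer $k$ of the product $W^{(d)}\cdots W^{(1)}$ does \emph{not} simply multiply the result by a unitary localized near $g$: it multiplies by the conjugate $\bigl(W^{(d)}\cdots W^{(k+1)}\bigr)\, g\, \bigl(W^{(d)}\cdots W^{(k+1)}\bigr)^{-1}$, whose support has been spread by the remaining layers. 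The far-exterior gates are worse still: they do not ``commute through to cancel'' individually, and the hypothesis $W(T)=1$ constrains only the full ordered product, not partial ones. Since $W(T)=1$, your ``more precisely'' reformulation $W_S(T)W(T)^{-1}$ supported near $\partial S$ is literally the statement to be proved, so the interpolation is carrying all the weight and it does not obviously hold it.

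The conjugation claim has a direct proof you almost state but then bypass: for $O$ supported in $S'$, conjugate layer by layer; after $k$ layers the support of $O$ has grown by at most $k$ gate-ranges and hence still lies in $S$, so only gates entirely inside $S$ ever touch it, and those are exactly the ones retained in $W_S$. This replaces the interpolation entirely. For comparison, the paper takes a different route: it splits the full circuit as $W(T)=W_2 W_1$ with $W_1$ the product of gates in the future lightcone (Cauchy development) of $[L,R]$ and $W_2$ the rest; then $W_1=W_2^{-1}$, and comparing the a priori supports of $W_1$ (inside $[L,R]$) and $W_2^{-1}$ (outside $[L+d,R-d]$) forces both onto the boundary collars, after which $W_S(T)=W_3 W_1$ with $W_3$ the near-edge gates of $S$ not already in $W_1$. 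Your conjugation route and the paper's support-intersection route are equally short once stated cleanly; the interpolation detour is the part to drop.
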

\begin{proof}
Consider those circuit elements in the future Cauchy development of $[L,R]$ (the blue region in \figref{circuit}). Denote the unitary formed by multiplying out these circuit elements by $W_{1}$ \footnote{In the continuous time language, this is approximately the same as $\mathcal{T} e^{-i \int^T_0 dt H_{S(t)}}$ where $S(t)=[L+ct, R-ct]$, the notation $H_{S(t)}$ denotes those terms in the Hamiltonian involving only sites in region $S(t)$, and $c$ is the Lieb-Robinson velocity.}. Denote the rest of the unitary circuit by $W_2$ (red and purple in \figref{circuit}) . Then $W(T)=W_2 W_1 =1$, and notably $W_1 = W^{-1}_2$. However $W_1$ has support in $S=[L,R]$ while $W^{\pm 1}_2$ has support on a different set, namely the complement of $[L+d,R-d]$. The only possible resolution is that  both $W_1$ and $W_2$ have support only in the intersection of these two sets, namely $[L,L+d)\cup (R-d,R]$. By \defref{def:restriction} we have $W_S (T) = W_3 W_1$ where $W_3$ is formed of those circuit elements with support on $[L,R]$ but not in $W_1$  (purple circuit elements in \figref{circuit}). Thus $W_3$ has spatial support in $[L,R]$ within $d$ of $L$ or $R$. The same statement is true of $W_1$ and hence also true for $W_S(T)=W_3 W_1$. Hence $W_S (T)= v_L v_R$ where $v_L$ has support in $[L,L+d] $ and $v_r$ has support in $ [R,R-d]$. 

\end{proof}

\begin{lemma}\label{BulkEigenstates}
A local Floquet unitary $U(T)$ with eigenstate order restricted to subsystem $S$ takes form $U_S (T)=v_L v_R e^{-i f'(\Gamma)}$ where $v_L,v_R$ are unitaries localized near the boundary, and  $[v_{L/R}:f]=  1$.
\end{lemma}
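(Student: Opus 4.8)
The plan is to exhibit $U_S(T)$ as (part of) the restriction of a \emph{globally trivial} local circuit, apply \lemref{lem:vLvRlemma}, and then tidy up the bulk factor using the l-bit structure of \appref{app:flocal}. Since $U(T)$ has eigenstate order on the closed system, \appref{app:flocal} gives $U(T)=e^{-if(\Gamma)}$ with $f=\sum_x f_x$ a local (range $\sim\zeta$) functional of the l-bits; hence $e^{+if}=\prod_x e^{+if_x}$ is realized by a finite-depth local circuit $C$ (the factors commute and have a bounded-degree overlap graph). Then $W\equiv C\,U(T)$ is a finite-depth local circuit which, on the closed system, equals $e^{+if}e^{-if}=1$, so \lemref{lem:vLvRlemma} applies to it.

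First I would restrict to $S=[L,R]$. Restriction keeps gates supported inside $S$ and so distributes over the concatenation: $W_S=C_S\,U_S(T)$; and since the $e^{if_x}$ commute, $C_S=\prod_{\mathrm{supp}(f_x)\subseteq S}e^{if_x}=e^{if''}$, again a local functional of the l-bits but now supported inside $S$. Applying \lemref{lem:vLvRlemma} to $W$ gives $W_S=w_L w_R$ with $w_L,w_R$ localized within the circuit depth of $L,R$ respectively, whence $U_S(T)=e^{-if''}\,w_L w_R$.

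Next I would peel the near-boundary part off $f''$ exactly as in \appref{ss:provei}: picking a deep-bulk interval $S'=[L+a,R-a]$ with $a$ much larger than all relevant ranges, write $f''=f'+g'_L+g'_R$ where $f'$ collects the terms supported in $S'$ and $g'_{L,R}$ the rest, localized near $L,R$; all three commute with one another, being functionals of the l-bits. Since $e^{-if'}$ is supported deep in the bulk while $w_{L,R}$ and $g'_{L,R}$ sit near the ends, $e^{-if'}$ commutes with all of $w_{L,R},g'_{L,R}$, and the pair $(g'_L,w_L)$ commutes with $(g'_R,w_R)$; reorganizing gives $U_S(T)=\bigl(e^{-ig'_L}w_L\bigr)\bigl(e^{-ig'_R}w_R\bigr)e^{-if'}=v_L v_R\,e^{-if'}$. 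The $v_{L,R}$ are then localized near $L,R$, and since their support is disjoint from $S'$ they commute with every bulk l-bit, in particular with $f'$ — this is the asserted $[v_{L/R}:f]=1$, with $f'$ playing the role of the $f$ of \eqnref{eq:decomprough}.

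The step I expect to be most delicate is the first one, for the same reason flagged in \appref{app:flocal}: away from the fixed points the l-bits are only exponentially localized, so the representation $U(T)=e^{-if(\Gamma)}$, the identification $C_S=e^{if''}$, and the commutations above all hold only up to exponentially small tails, and a careful argument must show these do not accumulate. One should also check separately that $U_S(T)$ — generated by the time-periodic local Hamiltonian $H_S(t')$ — genuinely inherits a \emph{complete} set of bulk l-bits: the terms dropped in passing from $H$ to $H_S$ live near $\partial S$ and, by a Lieb--Robinson estimate, cannot disturb a deep-bulk l-bit within one period, so that the decomposition produced above is precisely claim \textbf{(i)} applied to the restricted drive.
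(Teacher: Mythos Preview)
Your argument is correct and follows essentially the same route as the paper: write $U(T)=e^{-if(\Gamma)}$ via \appref{app:flocal}, concatenate with the inverse circuit so that the composite is globally trivial, invoke \lemref{lem:vLvRlemma} on the restriction to extract the boundary unitaries, and then absorb the near-edge pieces of the restricted $f$ into $v_{L,R}$ by splitting $f_S=f_{S'}+g_L+g_R$. Your added remarks on exponential tails and on $U_S$ inheriting bulk l-bits are apt caveats that the paper also leaves at the heuristic level.
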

\begin{proof}
As $U(T)$ has eigenstate order, and is low-depth, we assume we can write it as a local functional of local conserved quantities $e^{-i f( \Gamma)}$. Let $S$ be an extensive subregion of the system. The unitary circuit $W(t)$ formed by concatenating $U(t)$ and a circuit corresponding to $U'(t)=e^{i t f }$. Now the unitary circuit $W(T) = 1$, and is local by construction. Hence by \lemref{lem:vLvRlemma} it has the property $W_S(T)=v_L v_R$. On the other hand, from \defref{def:restriction} we have $W_S(T)= U_S(T) e^{ i f_S}$, where $f_S$ is just  $f$ restricted to those terms involving only conserved quantities in $S$. Hence we find $U_S(T) = v_L v_R e^{ -i f_S}$. At this stage it is not clear that $v_L,v_R$ commute with $f_S$. To make this clear, consider a region $S'=[L+a, R-a]$ where $a$ is much larger than the depth of the circuit, and the size $\xi$ of the conserved quantities. We can write
$$
f_{S} = f_{S'} + g_L + g_R
$$
 where $g_L$ involves only conserved quantities on the LHS of the system, while $g_R$ involves those on the right and all three terms on the RHS commute with one another because they only involve the conserved quantities. Note too that $v_L,v_R$ commute with $f_{S'}$ because as operators they have disjoint support (and $f$ always fermion parity even). From the discussion of $g_L,g_R$ above, $v_L,v_R$ continue to commute with $f_{S'}$  if we redefine $v_L \rightarrow v_L e^{- i g_L}$ and  $v_R \rightarrow v_R e^{- i g_R}$, in which case 
 $$
 U_S=v_L v_R e^{-i f_{S'}(\Gamma)}
 $$
where $v_L,v_R$ commute with $f_{S'}$ as required.

\end{proof}

\section{Alternative characterization of pumped charge for unitary symmetry groups $G$}\label{app:pumpedcharge}

Here we give a slightly different definition of pumped charge which does not require the assumption that we can decompose $U_S=v_L v_R e^{- i f}$. In the following we merely assume that $U_S$ is a local unitary with exact eigenstate order, and that the SPT order is many-body localizable and characterized by a string order parameter.

\begin{definition}
A string order operator is a unitary function of $G$ of form $\Gamma^{(c)}_{l,r}(g)=\mathcal{O}^{(c)}_l  \mathcal{O}^{(c)}_r \prod_{s\in (l,r)} V_s (g)$ where $\mathcal{O}^{(c)}_{l,r}$ are unitary operators localized (with some correlation length $\xi$) near $l,r$ respectively, and $V_s(g)$ is the on-site unitary symmetry operator. 
\end{definition}

\begin{definition}
We say unitary $U$ has (exact) eigenstate order $c\in \text{Cl}_G$ if it has a complete set of local conserved quantities taking the form $\Gamma^{(c)}_{l,r}(g)$ with $g \in G$, and $l,r$ arbitrary sites in the system.  Equivalently, $U=e^{- i f(\{\Gamma^{(c)}\})}$ where $f$ is a functional of a complete subset of all the local conserved quantities. 
\end{definition}

\begin{definition}
 Given subsystem $S=[L,R]$ we define special string order operator $\Gamma^{(c)}_L(g)= \mathcal{O}^{(c)}_M \prod_{r\in{[L,M)}}V(g)$ where $M$ is some point extensively far in the bulk (e.g., half-way along $S$). Similarly  $\Gamma^{(c)}_R(g)= \mathcal{O}^{(c)}_M \prod_{r\in{(M,R]}}V(g)$. 
\end{definition}

\begin{lemma}\label{lem:GLcommuteswithbulk}
$\Gamma^{(c)}_L(g)$ commutes with all local conserved quantities $\Gamma$ entirely in $S'$.
\end{lemma}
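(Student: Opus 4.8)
The plan is to show that $\Gamma^{(c)}_L(g)$ commutes with every local conserved quantity $\Gamma$ supported entirely inside $S'=[L+a,R-a]$, by a locality-plus-cancellation argument that exploits the product structure of the string operator. Write $\Gamma^{(c)}_L(g)=\mathcal{O}^{(c)}_M\prod_{r\in[L,M)}V(g)$, where $M$ is deep in the bulk (say halfway along $S$, chosen so that $M$ is extensively far from both $L$ and from the support of $\Gamma$, or else far enough inside that the complementary situation holds). First I would split into cases according to where $\Gamma$ sits relative to $M$ and relative to the correlation length $\xi$ of $\mathcal{O}^{(c)}_M$.

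\textbf{Case 1: $\Gamma$ is supported in $S'$ but a distance $\gg\xi$ away from $M$.} Here $\mathcal{O}^{(c)}_M$ trivially commutes with $\Gamma$ by disjointness of support (up to exponential tails). For the product $\prod_{r\in[L,M)}V(g)$, the key point is that each $V_s(g)$ is an \emph{on-site} symmetry operator and $\Gamma$ is itself one of the conserved quantities $\Gamma^{(c)}_{l,r}(g')$ of the SPT order --- by the defining assumption on the eigenstate order, $\Gamma$ is a string operator of type $(c)$, hence is built from on-site symmetry operators $V_s$ capped by local endpoint operators. Since $G$ is abelian and the $V_s(g)$ commute among themselves, and the $V$'s inside $\Gamma$ live on sites that are either all inside $[L,M)$ or all outside, the string $\prod_{r\in[L,M)}V(g)$ either commutes with the bulk of $\Gamma$ outright or one must check the two endpoints of $\Gamma$. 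The endpoint operators of $\Gamma$ are local, and the string of $V(g)$'s either fully contains a neighborhood of that endpoint (in which case one uses that $\Gamma$ as a whole is $G$-invariant, being a conserved quantity of a $G$-symmetric unitary) or is disjoint from it. This is the heart of the argument: a local operator that commutes with $V(g)$ everywhere except possibly near one endpoint, combined with $G$-invariance of the full $\Gamma$, forces commutation with the truncated string.

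\textbf{Case 2: $\Gamma$ is within $\xi$ of $M$.} Then one should instead choose $M$ differently, or observe that since $M$ can be taken to be any point extensively far in the bulk, we are free to relocate $M$ away from $\Gamma$; the statement is about a \emph{given} $\Gamma$ in $S'$, so we simply pick $M$ to avoid it. (If one insists on a fixed $M$ one repeats the Case 1 analysis now worrying about $\mathcal{O}^{(c)}_M$ overlapping $\Gamma$, using the explicit form of the type-$(c)$ string endpoints --- e.g. \eqnref{eq:classDstringorder} --- to check commutation directly.)

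The main obstacle I anticipate is making precise the claim in Case 1 that a local operator commuting with $V(g)$ on all sites except a bounded region near one of its endpoints must commute with a half-infinite truncation of $\prod_s V_s(g)$. The clean way to see it: let $\mathcal{O}$ be the endpoint operator of $\Gamma$ near the cut at $M$; then $\Gamma$ decomposes as $\mathcal{O}\cdot(\text{stuff far from }M)$, and $[\Gamma:V(h)]=1$ for all $h$ (conserved quantity of a symmetric drive) together with the fact that ``stuff far from $M$'' commutes with the truncated string forces $[\mathcal{O}:\text{truncated string}]=1$ as well. One must be careful that $\Gamma$ genuinely has only one endpoint near the cut --- i.e. that $l,r$ are chosen so that at most one of them is near $M$ --- but since the lemma quantifies over all $\Gamma$ entirely in $S'$ and $M$ is extensively far in the bulk, a $\Gamma$ straddling $M$ would have one endpoint on each side, which is handled by applying the same argument at both endpoints. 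Exponential tails from the finite correlation length $\xi$ and finite Lieb--Robinson/circuit depth are swept into the usual ``up to exponentially small corrections'' caveat used throughout the paper.
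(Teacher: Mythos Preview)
Your approach differs from the paper's, and the case analysis leaves a genuine gap. The paper's proof is a two-line trick that avoids all of this: it writes $\Gamma^{(c)}_L(g)=O\cdot\Gamma^{(c)}_{x,M}(g)$ where $x$ is a site many $\xi$ to the \emph{left} of $S$ on the original closed system and $O$ is supported entirely outside $S$. The point is that $\Gamma^{(c)}_{x,M}(g)$ is itself one of the conserved string operators, so it commutes with every $\Gamma$ in $S'$ by the eigenstate-order assumption; meanwhile $O$, living in the complement of $S$, is many $\xi$ away from $\Gamma\subset S'$ (and $\Gamma$ is fermion-parity even), so $[O:\Gamma]=1$. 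No case split on where $\Gamma$ sits relative to $M$ is needed.

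The hole in your route is the case where $\Gamma=\Gamma^{(c)}_{l,r}(g')$ straddles $M$ with $l,r$ both far from $M$. You cannot relocate $M$ out of the way (the lemma is for a fixed $\Gamma^{(c)}_L(g)$ commuting with \emph{all} $\Gamma$ in $S'$, and an extensively long $\Gamma$ will straddle any bulk choice of $M$). Your proposed decomposition --- ``$\mathcal{O}^{(c)}_M$ commutes with $\Gamma$ by disjointness'' plus ``the truncated $V$-string commutes with $\Gamma$ by $G$-invariance'' --- fails here because neither piece commutes with $\Gamma$ individually. In the class D topological case, $\mathcal{O}^{(c)}_M$ is a single Majorana which anticommutes with the on-site parity factor of $\Gamma$ at site $M$, while $\prod_{s\in[L,M)}V_s(g)$ anticommutes with the Majorana endpoint $\mathcal{O}^{(c)}_l$ of $\Gamma$. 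The two signs cancel, but your ``clean way'' (decomposing $\Gamma$ as an endpoint near $M$ times stuff far from $M$) is the wrong decomposition: $\Gamma$'s endpoints are at $l,r$, not at $M$, and its support is \emph{not} far from $M$. The paper's extension-to-a-true-conserved-quantity trick is exactly what packages this cancellation for free.
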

\begin{proof}
$\Gamma^{(c)}_L(g)$ is defined by writing down a conserved quantity on the original uncut system $\Gamma^{(c)}_{x,M}(g) $ where $x$ is many $\xi$ to the left of $S$, and restricting this unitary to $[L,R]$. Indeed $\Gamma^{(c)}_L(g) = O \Gamma^{(c)}_{x,M}(g)$ where $O$ is completely outside of $S$.  On the original uncut system, $\Gamma^{(c)}_{x,L}(g)$ will \text{a priori} commute with all the conserved quantities $\Gamma$ in $S'$.  Note then that $[\Gamma^{(c)}_L(g): \Gamma]= [ O \Gamma^{(c)}_{[x,M]}(g): \Gamma]=[ O: \Gamma]$. As the complement of $S$ is many correlation lengths away from $S'$ (and $\Gamma$ necessarily fermion parity even), we must have  $[ O: \Gamma]=1$. Hence $[\Gamma^{(c)}_L(g): \Gamma]=1$. 
\end{proof}

\begin{lemma}\label{lem:uptophase}
$\Gamma^{(c)}_L(g)$ commutes with $U_S$ up to a phase. Moreover, using \textbf{(A)}, this phase is equal to the older definition of pumped charge $\kappa(g)=[\Gamma^{(c)}_{L}:U_S]=[\Gamma^{(c)}_{R}:U_S]^{-1}$ if we assume $U_S=v_L v_R e^{-i f}$
\end{lemma}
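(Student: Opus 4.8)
The plan is to prove Lemma~\ref{lem:uptophase} in two stages, corresponding to its two assertions. First I would show that $\Gamma^{(c)}_L(g)$ commutes with $U_S$ up to a phase, using only the hypothesis that $U_S=e^{-if(\{\Gamma^{(c)}\})}$ has exact eigenstate order. By \lemref{lem:GLcommuteswithbulk}, $\Gamma^{(c)}_L(g)$ commutes with every local conserved quantity $\Gamma$ supported entirely in $S'$. Writing $U_S=e^{-if}$ where $f$ is a local functional of a complete set of conserved quantities, I would split $f=f_{S'}+g_L+g_R$ exactly as in \appref{app:misclemmas}: $f_{S'}$ collects all terms involving only conserved quantities in the deep bulk region $S'=[L+a,R-a]$, while $g_{L},g_{R}$ involve only conserved quantities within $O(a)$ of the respective edges. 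Since $\Gamma^{(c)}_L(g)$ commutes with $f_{S'}$ (by \lemref{lem:GLcommuteswithbulk}) and also commutes with $g_R$ (disjoint, widely separated supports, $g_R$ parity-even), the commutator $[\Gamma^{(c)}_L(g):U_S]$ equals $[\Gamma^{(c)}_L(g):e^{-ig_L}]$, which is a unitary localized near the left edge $L$. The same argument with the complementary string operator $\Gamma^{(c)}_R(g)$ shows $[\Gamma^{(c)}_R(g):U_S]$ is localized near $R$.

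Next I would pin down that these localized commutators are in fact pure phases. The key identity is that $\Gamma^{(c)}_L(g)\,\Gamma^{(c)}_R(g)=\Gamma^{(c)}_{L,R}(g)$ (up to a harmless factor from the doubled $\mathcal{O}^{(c)}_M$ in the middle, which can be absorbed), i.e.\ the product of the two half-string operators is a genuine conserved quantity of the full cut system $U_S$, hence commutes with $U_S$ exactly. Therefore $[\Gamma^{(c)}_L(g):U_S]\,[\Gamma^{(c)}_R(g):U_S]$, after accounting for the middle-point operators, reduces to the identity; but the two factors are unitaries with widely separated supports (one near $L$, one near $R$), so by the now-standard argument used below \eqnref{eq:FP} and in \lemref{lem:phase} each must individually be a $\mathrm{U}(1)$ scalar $e^{\pm i\theta(g)}$. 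This gives $[\Gamma^{(c)}_L(g):U_S]=[\Gamma^{(c)}_R(g):U_S]^{-1}=\kappa(g)$ as a well-defined phase.

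For the second assertion I would bring in the decomposition $U_S=v_Lv_Re^{-if}$ guaranteed by \textbf{(A)}, with $v_{L,R}$ localized at the edges and commuting with $f$. Since $\Gamma^{(c)}_L(g)$ commutes with all bulk conserved quantities (\lemref{lem:GLcommuteswithbulk}), it commutes with $e^{-if}$, and since its support near $L$ is separated from the support of $v_R$, it commutes with $v_R$; hence $[\Gamma^{(c)}_L(g):U_S]=[\Gamma^{(c)}_L(g):v_L]$. Now I would observe that $\Gamma^{(c)}_L(g)$ restricted near the left edge acts, on the relevant edge Hilbert space, like the edge symmetry generator $\widehat{g}_L$ (this is essentially the content of the string-order/edge-action correspondence reviewed in \secref{s:algebraic}), so that $[\Gamma^{(c)}_L(g):v_L]$ coincides with $\kappa_L(g)=(V(g):v_L)$, the original pumped charge. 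The symmetric computation at the right edge gives $[\Gamma^{(c)}_R(g):U_S]=[\Gamma^{(c)}_R(g):v_R]=\kappa_R(g)$, and \lemref{lem:phase} (or \eqnref{eq:kLkRone}) already gives $\kappa_L\kappa_R=1$, matching the sign structure claimed.

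The main obstacle I expect is the bookkeeping around the middle point $M$: the product $\Gamma^{(c)}_L(g)\Gamma^{(c)}_R(g)$ contains $\mathcal{O}^{(c)}_M{}^2$ rather than a clean string, so one must argue that this squared local operator either is a conserved quantity itself or can be absorbed into a redefinition without changing the edge-localized phases — this requires knowing the relevant algebra of the $\mathcal{O}^{(c)}$ operators (for the class D example $\mathcal{O}^2\propto 1$, and more generally one wants $\mathcal{O}^{(c)}_M$ to be expressible via bulk conserved quantities near $M$). A secondary subtlety is making precise the identification of the restricted string operator with the edge symmetry action $\widehat{g}_L$; I would handle this by the same "restrict a bulk conserved quantity to $S$" construction used in \lemref{lem:GLcommuteswithbulk}, noting that on the uncut system $\Gamma^{(c)}_{x,M}(g)$ differs from $\prod_{r\in[L,M)}V_r(g)$ only by operators outside $S$, which act trivially on the edge space of $S$.
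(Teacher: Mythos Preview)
Your first paragraph contains the main gap. You invoke a global decomposition $U_S=e^{-if}$ with $f$ a local functional of string-order conserved quantities and then split $f=f_{S'}+g_L+g_R$. But this assumption proves too much: every $\Gamma^{(c)}_{l,r}(h)$ appearing in $g_L$ is a symmetric operator whose support lies entirely in $[L,M)$, where $\Gamma^{(c)}_L(g)$ acts as the raw on-site symmetry $\prod V_s(g)$; hence $\Gamma^{(c)}_L(g)$ commutes with $g_L$ as well, and your argument yields $[\Gamma^{(c)}_L(g):U_S]=1$ identically. That is exactly the content of \lemref{trivialdrives} (TIH drives have trivial pumped charge), not a proof of \lemref{lem:uptophase} for general $U_S$ with possibly nontrivial pumped charge. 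The first assertion of the lemma is supposed to hold \emph{without} knowing that the edge pieces of $U_S$ are functionals of string-order operators.

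The paper's argument is genuinely different: it factors $\Gamma^{(c)}_L(g)=V_{[L,x]}(g)\,q_x\,\Gamma^{(c)}_{x,M}(g)$ at an \emph{arbitrary} bulk point $x$, strips off the conserved piece $\Gamma^{(c)}_{x,M}$, and uses only that $U_S$ is low-depth and globally $G$-symmetric to conclude that $[V_{[L,x]}(g)q_x:U_S]$ is supported near $x$. Since the left-hand side is independent of $x$, it must be a scalar. Your product-identity idea in step 2 does not rescue this: without an $x$-independence (or $M$-independence) input, the naive supports of $[\Gamma^{(c)}_L:U_S]$ and $[\Gamma^{(c)}_R:U_S]$ both contain a neighborhood of $M$, so the ``widely separated supports'' step fails.

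For the second assertion your outline is close to the paper's, but you miss a fermion-parity sign: $v_R$ can be parity-odd, as can $\Gamma^{(c)}_L$ through its endpoint $\mathcal{O}^{(c)}_M$, so they may anticommute rather than commute. The paper tracks this as a sign $\eta$ which appears once when passing $v_R$ past $\Gamma^{(c)}_L$ and once again when relating $[\Gamma^{(c)}_L:v_L]$ to $[V(g):v_L]$ (using that $v_L,v_R$ have equal parity); the two cancel to give $[\Gamma^{(c)}_L:U_S]=\kappa(g)$.
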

\begin{proof}
$\Gamma^{(c)}_L(g)$ is supported almost entirely on some interval $[L,M]$. Pick a point $x$ in this interval but many $\xi$ from $L,M$. Then to good approximation $\Gamma^{(c)}_L(g) = V_{L,x}(g) q_x \Gamma^{(c)}_{x,M}(g) $ where $\Gamma^{(c)}_{x,M}(g)$ commutes with $U_S$ and $q_x$ local to $x$ . As a result $[\Gamma^{(c)}_{L}:U_S] =  [V_{L,x}(g) q_x:U_S] = \mathfrak{g}(x)$. It follows from the fact $U_S$ is low depth and symmetric that $\mathfrak{g}(x)$ is some operator with support near $x$ . Therefore $[\Gamma^{(c)}_{L}:U_S] = \mathfrak{g}(x) $ for any $x \in [L,M]$ but many $\xi$ away from the end-points. As the LHS does not depend on $x$ this implies $\mathfrak{g}(x)$ is a pure phase. Now as $\Gamma^{(c)}_{L}$ commutes with all of the conserved quantities well in the bulk we have $[\Gamma^{(c)}_{L}:U_S] = [\Gamma^{(c)}_{L}:v_L v_R] = \eta  [\Gamma^{(c)}_{L}:v_L] $ where $\eta = 1$ unless both $\Gamma^{(c)}_{L},v_R$ are fermion parity odd in which case $\eta = -1$. Now from the form of the string operator we know that $[\Gamma^{(c)}_{L}:v_L] = \eta [V(g)_S:v_L]  = \eta \kappa_L(g)$ where $\eta$ is the same as above because $v_L ,v_R$ have the same fermion parity. Hence $[\Gamma^{(c)}_{L}:U_S] = \kappa(g)$.
\end{proof}

\begin{lemma}
The pumped charge $\kappa(g)$ defined above is robust under $U_S \rightarrow U_S w_L w_R$ where $w_{L,R}$ are local (compared to the system size) symmetric unitaries which act only near the $L,R$ end of the system respectively.
\end{lemma}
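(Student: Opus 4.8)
The plan is to show that the pumped charge, characterized via the string operator $\Gamma^{(c)}_L(g)$ as in \lemref{lem:uptophase}, is unchanged when we replace $U_S \rightarrow U_S w_L w_R$. Since $w_L$ and $w_R$ act near opposite ends and the pumped charge only depends on data local to one edge, it suffices to track what happens near the left edge: I would fix an interior point $M$ extensively far from both boundaries and use the special string operator $\Gamma^{(c)}_L(g) = \mathcal{O}^{(c)}_M \prod_{r\in[L,M)} V(g)$, whose support lies entirely in $[L,M]$. The key computation is
\be
[\Gamma^{(c)}_L(g) : U_S w_L w_R] = [\Gamma^{(c)}_L(g): U_S] \cdot U_S\big([\Gamma^{(c)}_L(g):w_L w_R]\big)U_S^{-1},
\ee
so the claim reduces to showing $[\Gamma^{(c)}_L(g): w_L w_R] = 1$ (which then also commutes trivially through conjugation by $U_S$ up to being a phase). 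Since $w_R$ is supported far from $\Gamma^{(c)}_L(g)$'s support — near $R$, whereas $\Gamma^{(c)}_L(g)$ lives in $[L,M]$ with $M$ in the bulk — we have $[\Gamma^{(c)}_L(g):w_R]=1$ (with the usual fermionic caveat: at least one of the two operators is fermion-parity even, and one can always choose $M$ so that $\Gamma^{(c)}_L(g)$ has the parity of $\mathcal{O}^{(c)}_M$, matching $w_R$'s parity sector). The remaining factor $[\Gamma^{(c)}_L(g): w_L]$ is where symmetry enters: writing $\Gamma^{(c)}_L(g) = V_{L,x}(g)\, q_x\, \Gamma^{(c)}_{x,M}(g)$ for a point $x$ chosen many correlation lengths to the \emph{right} of $w_L$'s support, the operators $q_x$ and $\Gamma^{(c)}_{x,M}(g)$ commute with $w_L$ by locality, so $[\Gamma^{(c)}_L(g):w_L] = [V_{L,x}(g): w_L]$, and this is trivial because $w_L$ is $G$-symmetric, i.e. commutes with the product of on-site generators $V_{L,x}(g) = \prod_{r\in[L,x)}V_r(g)$.

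Having established $[\Gamma^{(c)}_L(g): w_L w_R] = 1$, I would then argue that $U_S w_L w_R$ still has the right eigenstate order in the bulk — the bulk conserved quantities $\Gamma$ entirely inside $S'$ are untouched since $w_{L,R}$ act only near the edges, so $U_S w_L w_R$ still possesses a complete set of bulk l-bits of type $\Gamma^{(c)}$, and hence \lemref{lem:uptophase} applies verbatim to the modified unitary. Therefore the new pumped charge equals $[\Gamma^{(c)}_L(g): U_S w_L w_R]$, which by the displayed identity and the vanishing commutator equals $[\Gamma^{(c)}_L(g): U_S] = \kappa(g)$, the old pumped charge. One should also check $(V(g):U_S w_L w_R)=1$ continues to hold, which is immediate since $w_{L,R}$ are symmetric.

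The main obstacle I anticipate is the careful bookkeeping of supports and the fermion-parity subtleties: one must verify that $x$ can simultaneously be chosen far enough from $L$ (so that $w_L$'s support is to the left of $x$ and the decomposition $\Gamma^{(c)}_L(g) = V_{L,x}(g) q_x \Gamma^{(c)}_{x,M}(g)$ is valid) while $M$ stays deep in the bulk — this requires $w_L$ to be ``local compared to the system size,'' which is exactly the hypothesis. The other delicate point, as flagged just before this lemma in the main text, is that this argument genuinely uses the string-operator (unitary-$G$) machinery of \appref{app:pumpedcharge}; in the anti-unitary case the analogous ``local time reversal string'' is problematic\cite{Chen15}, so I would state the lemma (as the excerpt does) for unitary $G$ only, and note that the edge-decomposition argument of \secref{ss:Robust} is what one falls back on more generally.
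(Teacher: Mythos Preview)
Your proof is correct and follows essentially the same route as the paper's: the paper observes directly that on the support of $w_L$ the string operator $\Gamma^{(c)}_L(g)$ acts like the on-site symmetry $V_s(g)$, so $w_L$ being $G$-symmetric (hence fermion-parity even) gives $[\Gamma^{(c)}_L(g):w_L]=1$, and the commutator with $w_R$ vanishes by locality. Your version is simply a more explicit unpacking of the same argument via the commutator identity and the decomposition $\Gamma^{(c)}_L = V_{L,x}\,q_x\,\Gamma^{(c)}_{x,M}$; the only place you work harder than necessary is the fermion-parity bookkeeping for $w_R$ --- since $w_{L,R}$ are symmetric they are automatically parity even, so no choice of $M$ is needed.
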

\begin{proof}
This follows  from the previous lemma.  On sites $s$ in the support of $w_L$,  $\Gamma^{(c)}_{L}$ acts  like $V_{s}(g)$. But $w_L$ is symmetric (and assumed parity even in a system with fermions \footnote{In fermionic SPTs, fermion parity is always a symmetry, so as $W_{L,R}$ are symmetric they must also be fermion parity even\cite{Gu14}.}). Hence $\Gamma^{(c)}_{L}$ commutes with $w_L$. Therefore $[\Gamma^{(c)}_{L}:U_S] = \kappa(g)$ remains unchanged. 
\end{proof}

%
%

\begin{lemma}\label{trivialdrives} Time independent Hamiltonian (TIH) drives have trivial pumped charge. 
\end{lemma}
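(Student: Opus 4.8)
The plan is to show directly that the special left string operator $\Gamma^{(c)}_L(g)$ commutes with the generator of the restricted evolution, so that $\kappa(g)=[\Gamma^{(c)}_L(g):U_S]=1$ identically. A time-independent-Hamiltonian (TIH) drive is $U(T)=e^{-iHT}$ with $H$ a local $G$-symmetric Hamiltonian carrying eigenstate order $c$; by the MBL-plus-symmetry assumptions ($H$ a functional of l-bits, the l-bits commuting with the $V(g)$) I may take $H=\sum_x h_x$ with each $h_x$ a local $G$-symmetric term. Restricting to $S=[L,R]$ in the sense of \defref{def:restriction} gives $U_S(t)=e^{-iH_St}$ with $H_S=\sum_{\text{supp}\,h_x\subseteq S}h_x$, again a sum of local $G$-symmetric terms. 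I will also use, as in the proof of \lemref{lem:GLcommuteswithbulk}, that $\Gamma^{(c)}_L(g)=\mathcal{O}^{(c)}_M\prod_{r\in[L,M)}V_r(g)$ with $M$ deep in the bulk of $S$ is the restriction to $S$ of a genuine closed-system conserved quantity $\Gamma^{(c)}_{x,M}(g)$, i.e.\ $\Gamma^{(c)}_L(g)=O\,\Gamma^{(c)}_{x,M}(g)$ with $O$ supported entirely outside $S$.

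Then I would verify $[\Gamma^{(c)}_L(g),H_S]=0$ term by term. For any $h_x$ whose support lies in $[L,M)$ and away from the correlation length around $M$, $\Gamma^{(c)}_L(g)$ restricted to that support is exactly $\prod_r V_r(g)$, so $G$-symmetry of $h_x$ gives $[h_x,\Gamma^{(c)}_L(g)]=1$; this covers all terms near $L$ as well. For any $h_x$ supported to the right of $M$ (including near $R$) the supports are disjoint (and $h_x$ is fermion-parity even), so again $[h_x,\Gamma^{(c)}_L(g)]=1$. The only remaining terms sit within a correlation length of $M$; splitting $0=[H,\Gamma^{(c)}_{x,M}(g)]$ into its pieces supported near $x$ and near $M$ — which have disjoint support and are traceless, hence vanish separately — shows these terms collectively commute with $\Gamma^{(c)}_{x,M}(g)$, and since they are unchanged by the restriction (as $M$ is far from $\partial S$) and $O$ acts outside $S$, they also collectively commute with $\Gamma^{(c)}_L(g)=O\,\Gamma^{(c)}_{x,M}(g)$. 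Summing the three cases gives $[\Gamma^{(c)}_L(g),H_S]=0$, hence $[\Gamma^{(c)}_L(g),e^{-iH_St}]=1$ for all $t$, and in particular $\kappa(g)=[\Gamma^{(c)}_L(g):U_S]=1$ for every $g$; the same holds on the right, so the pumped charge is trivial.

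I would also record the companion interpolation argument, which is less direct but more forgiving of the exponentially small tails appearing away from fixed-point models: for every $t$, $U_S(t)=e^{-iH_St}$ is the restriction of the closed-system evolution $e^{-iHt}$, which still carries eigenstate order $c$, so by \lemref{lem:uptophase} $\kappa_t(g):=[\Gamma^{(c)}_L(g):U_S(t)]$ is a well-defined phase and, by the reasoning of \secref{ss:unitary}, a $1$d representation of the finite group $G$; thus $\kappa_t(g)$ lies in the discrete set of $|g|$-th roots of unity, $t\mapsto\kappa_t(g)$ is continuous, and $\kappa_0(g)=1$, forcing $\kappa_t(g)\equiv1$ and hence $\kappa(g)=\kappa_T(g)=1$. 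The main thing to be careful about is the bookkeeping in the near-$M$ case — that the terms of $H_S$ near $M$ really coincide with those of $H$ near $M$ (true once $M$ is farther from $\partial S$ than the interaction range), and that in smeared models the exponentially small mismatch between $\Gamma^{(c)}_L(g)$ and the true restricted conserved quantity only displaces $\kappa(g)$ by an exponentially small amount, which, being pinned to a root of unity of bounded order, must then be exactly $1$. Everything else is the familiar disjoint-support / low-depth reasoning of the main text.
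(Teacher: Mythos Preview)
Your proof is correct and follows the same approach as the paper: the paper's argument is simply that $H_S$ having eigenstate order $c$ implies $[\Gamma^{(c)}_L(g):H_S]=0$, whence $[\Gamma^{(c)}_L(g):U_S]=1$ and $\kappa(g)=1$. You have expanded the one-line assertion ``it follows from the existence of such a string order'' into an explicit term-by-term verification over the three regions (left of $M$, right of $M$, near $M$), and added a second, independent continuity argument that the paper does not give; both are sound elaborations of the same core idea.
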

\begin{proof}
For a time independent Hamiltonian drive $U_{S}(T)=e^{-i H_{S}T}$ where $H_S$ is a symmetric local Hamiltonian. This Hamiltonian is assumed to have an eigenstate order $c$. It follows that from the existence of such a string order that $[\Gamma^{(c)}_{L}(g):H_S]=0$. As a result $U_S$ has the same eigenstate order, and $[\Gamma^{(c)}_{L}(g):U_S]=1$. Hence $\kappa(g)=1$. 
\end{proof}
\section{Edge structure for Class D}\label{s:classDedge}
In this section we use a different method to show how the Floquet phases in \secref{s:motivatingexample} arise. Suppose we are given a fermion parity symmetric, local unitary Floquet circuit $U$ on a large closed system.  Suppose further that the Floquet unitary $U(T)$ has eigenstate order -- that is to say, there is a complete set of conserved quantites of the (approximate) form 
$$
\Gamma_{l,r}^{(c)}(g=-1)=\begin{cases}
\prod_{l<s<r}i\psi_{s} \psib_{s} & c=1\\
i\psi_{l} \ds{\prod_{l<s<r}i\psi_{s} \psib_{s}} \psib_{r}& c=-1
\end{cases}
 $$
where $c=\pm1$ corresponds to trivial/topological eigenstate order respectively. On a system with boundary, we can arrange things so that
\be
\mathcal{A}=\text{gen}\left\{\underbrace{v_{L} v_{R} e^{-i f}}_{U_{S}},v_L,v_R, \Gamma^{(c)}(g)_{L},   \Gamma^{(c)}(g)_{R}, \Gamma^{(c)}_{b}(g)\right\}
\ee
are a complete set of operators. The notation $\Gamma^{(c)}(g)_{L,R}$ is explained in \appref{app:pumpedcharge}.  Our goal is to find the dimension and $U_{S}$ eigenvalues of a minimal representation of this algebra. Formally, we are looking for $\mathcal{A}/Z(\mathcal{A})$ where $Z$ is the center of the algebra. But in the present case, due to the bulk eigenstate order, $Z(\mathcal{A})$ contains all the bulk string operators, so we are indeed concerned only with $\mathfrak{a}/Z(\mfa)$ where

\be\label{eq:classDalg}
\mathfrak{a}=\text{gen}\left\{v_{L},v_{R}, \Gamma^{(c)}(g)_{L},\Gamma^{(c)}(g)_{R} \}\right\}
\ee

The commutation relations of this algebra are
\begin{align}
[\Gamma^{(c)}(-1)_{L}: U_S]= \kappa(g) \n
[\Gamma^{(c)}(-1)_{L}: \Gamma^{(c)}(-1)_{R}]= c
\end{align}
where $\eta \in \mathcal{A}_G = \mathbb{Z}_2$ and $c\in 
\mathbb{Z}_2$. One now asks what are the minimal dimension representations of this algebra? The answer is $d =\frac{1}{4}(9+c-3p - 3 c p)$ where $p=\kappa(-1)$. 
\subsection{Class D edges}
Here we work out the representation theory of $\mathfrak{a}$. The commutation relations for this algebra are
\begin{align*}
[\Gamma_{L}^{(c)}:\Gamma_{R}^{(c)}] & =c & [\Gamma_{L}^{(c)}:v_{L}v_{R}] & =p\\
[v_{L}:v_{R}] & =p & [\Gamma_{L}^{(c)}:v_{L}] & =(-1)^{\delta_{c=p=-1}}p\\
\end{align*}

\paragraph*{$c=1,p=1$:} In this cases, all the generators of $\mfa$ commute. Therefore $\mfa/Z(\mfa)=\{1\}$ is trivial, and there are no protected degeneracies in the Floquet spectrum.  Therefore $\mfa/Z(\mfa)=\{1\}$ is trivial, and there are no protected degeneracies in the Floquet spectrum.

\paragraph*{$c=1,p=-1$:} In this cases, none of the generators of $\mfa$ commute. To elucidate the edge structure, we find a maximal commuting sub-algebra of $\mfa$. It is a convenient to choose $\mathfrak{b}=\langle v_Lv_R,\Gamma_L \Gamma_R\rangle$ because the eigenvalues of $v_L v_R$ are up to a phase (from the bulk) just the eigenvalues of $U_S$. Starting with a simultaneous eigenstate $\mid \psi \rangle$ for $\mathfrak{b}$, we get a minimal representation of size $4$ summarized here:

\begin{tabular}{|c|c|c|}
\hline 
 & ${v_{L}v_{R}}$ & $\Gamma_{L}\Gamma_{R}$\tabularnewline
\hline 
$\mid\psi\rangle $ & $u$ & $u'$\tabularnewline
\hline 
$\Gamma_{L}$$\mid\psi\rangle $ & $-u$ & $u'$\tabularnewline
\hline 
$V_{L}$$\mid\psi\rangle $ & $u$ & $-u'$\tabularnewline
\hline 
$\Gamma_{L}V_{L}$$\mid\psi\rangle $ & $-u$ & $-u'$\tabularnewline
\hline 
\end{tabular}

\paragraph*{$c=-1,p=1$:} In this case  $Z(\mfa)=\langle v_L,v_R\rangle$, so the remaining sub-algebra is $\mfa/Z(\mfa)=\langle \Gamma_L,\Gamma_R\rangle$. To elucidate the edge structure, we find a maximal commuting sub-algebra of $\mfa/Z(\mfa)$. It is a convenient to choose $\mathfrak{b}=\langle \Gamma_L \Gamma_R \rangle$. Starting with a simultaneous eigenstate $\mid \psi \rangle$ for $\mathfrak{b}$, we get a minimal representation of size $2$.
\begin{tabular}{|c|c|c|}
\hline 
 & ${v_{L}v_{R}}$ & $\Gamma_{L}\Gamma_{R}$\tabularnewline
\hline 
$\mid\psi\rangle $ & $u$ & $u'$\tabularnewline
\hline 
$\Gamma_{L}$$\mid\psi\rangle $ & $u$ & $-u'$\tabularnewline
\hline 
\end{tabular}
\paragraph*{$c=-1,p=-1$:} In this case  $Z(\mfa)=\langle \Gamma_Lv_L,\Gamma_R v_R\rangle$, so the remaining sub-algebra is $\mfa/Z(\mfa)=\langle V_L,V_R\rangle$. A maximal commuting sub-algebra is $\mathfrak{b}=\langle v_L v_R \rangle$. Starting with a simultaneous eigenstate $\mid \psi \rangle$ for $\mathfrak{b}$, we again get a minimal representation of size $2$.
\begin{tabular}{|c|c|}
\hline 
 & $v_{L}v_{R}$\tabularnewline
\hline 
$\mid\psi\rangle $ & $u$\tabularnewline
\hline 
$v_{L}$$\mid\psi\rangle $ & $-u$\tabularnewline
\hline 
\end{tabular}

\subsection{MBL Binary drives realizing the Floquet phases}
The four possible Floquet phases here described can be realized using binary drives, as demonstrated in \cite{Khemani15}. A binary drive involving Hamiltonians $H_1,H_2$ and times $t_1,t_2$ is a unitary matrix function of time $U(t)$

\[
U(t)\equiv\begin{cases}
e^{-iH_{1}t} & 0\leq t<t_{1}\\
e^{-iH_{2}(t-t_{1})}e^{-iH_{1}t_{1}} & t_{1}\leq t<t_{1}+t_{2}
\end{cases}
\]
In the context of class D, we set 

\begin{align*}
H_{1}= & -\sum ih_{i}\psi_{i}\bar{\psi}_{i}\\
H_{2}= & -\sum iJ_{i}\psi_{i}\bar{\psi}_{i+i}
\end{align*}
Setting $\overline{J_i}=\overline{h_i}=1$, we can get the full $\mathbb{Z}_2\times \mathbb{Z}_2$ classification by using $(t_1,t_2)=(\frac{\pi}{4},0),(0,\frac{\pi}{4}),(\frac{\pi}{4},\frac{\pi}{2}),(\frac{\pi}{2},\frac{\pi}{4})$ which correspond respectively to obtain $(c,p)=(1,1),(-1,1),(1,-1),(-1,-1)$. 

\section{Representation theory of $\mathbb{Z}_2\times \mathbb{Z}_2$ edge in general, using string order method}\label{app:Z2Z2full}
Continuing from \secref{s:Z2xZ2}, let us discuss the string order in $\mathbb{Z}_2\times \mathbb{Z}_2$ PM and SPT states.  The different forms of l-bits lead to a different kinds of string order in the two resulting SPT phases. Multiplying the l-bits together, notice that the eigenstates of the paramagnet can be chosen to be eigenstates of the string operators
\be\label{eq:PM}
\prod^r_{s=l} X^{1}_s,\,\,\,\, \prod^r_{s=l} X^{2}_s\punc{,}
\ee
The two operators can be though of as local generators of the $\mathbb{Z}_2\times \mathbb{Z}_2$ symmetry, corresponding to group elements $(-1,1),(1,-1)$ respectively. On the other hand, the eigenstates of the SPT can be chosen to be eigenstates of
\be\label{eq:SPT}
Z^{2}_{l-1} \prod^r_{s=l} X^{1}_s Z^{2}_{r+1},\,\,\,\,Z^{1}_{l} \prod^r_{s=l} X^{2}_s Z^{1}_{r-1}\punc{.}
\ee
These operators (away from $l,r$) also look like $(-1,1),(1,-1)$ symmetry generators. Now move away from from the fixed point and consider a disordered scenario with the same symmetry group, and a complete set of l-bits.  Away from the fixed points, we still expect there to be string operators which commute with the Hamiltonian, which are unitary functions of $G$ of form $\Gamma_{l,r}(g)=\mathcal{O}_l   \prod_{s\in (l,r)} V_s (g) \mathcal{O}_r$ where $\mathcal{O}_{l},\mathcal{O}_{r}$ are unitary operators localized (with some characteristic length scale $\xi$) near $l,r$ respectively, and $V_s(g)$ is the on-site unitary symmetry operator -- corresponding to $X^{1,2}_s$ for $g=(-1,1),(1,-1)$ respectively.  We will say that the Hamiltonian is in the trivial MBL PM phase if it commutes with a family of string operators which at large distances take a form approaching \eqnref{eq:PM} while it is said to be in the SPT phase if they take the form \eqnref{eq:SPT}. With these string operators in mind, we remind the reader how to define an action of the symmetry group at each edge (see \appref{app:pumpedcharge}). 

\begin{definition}
 Given subsystem $S=[L,R]$ we define modify the above string order operators to form $\Gamma_L(g)= \mathcal{O}_M \prod_{r\in{[L,M)}}V(g)$ where $M$ is some point extensively far in the bulk (e.g., half-way along $S$). Similarly  $\Gamma_R(g)= \mathcal{O}_M \prod_{r\in{(M,R]}}V(g)$. 
\end{definition}

In the MBL phase, these operators $\Gamma_L(g),\Gamma_R(g)$ act like symmetry generators at each edge, and can be argued to commute with the bulk conserved quantities (see \lemref{lem:GLcommuteswithbulk}). For this reason, it is natural to identify them with $\widehat{g}_L,\widehat{g}_R$ from the previous section. 

\subsection{Edge structure}
Now suppose we are given a unitary $U_S$ with prescribed bulk eigenstate order `$c$' and corresponding string order operators $\Gamma_{l,r}^{(c)}(g)$. By \lemref{lem:uptophase}, $\Gamma_{R}^{(c)}(g)$ commutes with $U_S$ up to phases characterized by a 1d representation $\kappa(g)$. The set of operators which commute with $U_S$ up to phases are simply
\be
\mathcal{A}=\text{gen}\left\{U_{S}, \Gamma^{(c)}(g)_{L},   \Gamma^{(c)}(g)_{R}, \Gamma^{(c)}(g)_{bulk}\right\}
\ee
Our goal is to find the dimension and $U_{S}$ eigenvalues of a minimal representation of this algebra. Formally, we are looking for $\mathcal{A}/Z(\mathcal{A})$. Due to the bulk eigenstate order, $Z(\mathcal{A})$ is simply generated by the bulk string operators. Hence, we are interested only in the algebra

\be
\mfa=\text{gen}\left\{U_{S}, \Gamma^{(c)}(g)_{L},   \Gamma^{(c)}(g)_{R}\right\}
\ee
The commutation relations of this algebra are 
\begin{align}
[\Gamma^{(c)}(g)_{L}: U_S]&= \kappa(g) \n
[\Gamma^{(c)}(g)_{R}: U_S]&= \kappa^{-1}(g) \n
[\Gamma^{(c)}(g)_{L}: \Gamma^{(c)}(h)_{L}]&= \mu_c(g,h)\n
[\Gamma^{(c)}(g)_{R}: \Gamma^{(c)}(h)_{R}]&= \mu_c^{-1}(g,h)
\end{align}
where $\kappa \in \mathcal{A}_G \sim \mathbb{Z}_2\times \mathbb{Z}_2$ and $c\in \mathbb{Z}_2$. This suggests a Floquet classification $\mathbb{Z}^{\otimes 3}_{2}$.  Here $\mu_c(g,h)=e^{i \pi (g^2 h^1 -g^1 h^2)}= (i_h\omega(g))^{-1}$ if we take  $\omega(g,h)=e^{i \pi (g^1 h^2 -g^2 h^1)}$ (reverting to additive notation). One now asks what are the minimal dimension representations of this algebra? The answer is $d =4$ if either $c=-1$ or $\kappa\neq \text{Id}$. 

Now we work out the representation theory of $\mathfrak{a}$. The commutation relations for this algebra are

\paragraph*{$c=1,\kappa=Id$:} In this cases, all the generators of $\mfa$ commute. Therefore $\mfa/Z(\mfa)=\{1\}$ is trivial, and there are no protected degeneracies in the Floquet spectrum.  Indeed $\mfa/Z(\mfa)=\{1\}$ is trivial.

\paragraph*{$c=1,\kappa\neq Id$:} For a non-trivial character $\kappa$ of $\mathbb{Z}_2 \times \mathbb{Z}_2$, there is a unique $1\neq g_\kappa \in G$ such that $\kappa(g_\kappa)=1$. Let another independent generator by $\bar{g}_\kappa$. With this information, we can show that $Z(\mfa)=\langle \Gamma_L(g_\kappa),\Gamma_R(g_\kappa) \rangle$. Forming $\mfa/Z(\mfa)$, we examine maximal commuting sub-algebra $\mathfrak{b}<\mfa/Z(\mfa)$ generated by representatives $\mathfrak{b}=\langle \Gamma_L(\bar{g}_\kappa)\Gamma_R(\bar{g}_\kappa),v_L v_R\rangle$. See \tabref{tab:PMneq1}.
\begin{table}
\begin{tabular}{|c|c|c|}
\hline 
 & $v_{L}v_{R}$ & $\Gamma_{L}(\bar{g}_{\kappa})\Gamma_{R}(\bar{g}_{\kappa})$\tabularnewline
\hline 
$\mid\psi\rangle $ & $u$ & $u'$\tabularnewline
\hline 
$v_{L}$$\mid\psi\rangle $ & $u$ & $-u'$\tabularnewline
\hline 
$\Gamma_{L}(\bar{g}_{\kappa})$$\mid\psi\rangle $ & $-u$ & $u'$\tabularnewline
\hline 
$\Gamma_{L}(\bar{g}_{\kappa})v_{L}$$\mid\psi\rangle $ & $-u$ & $-u'$\tabularnewline
\hline 
\end{tabular}
\caption{}\label{tab:PMneq1}
\end{table}

\paragraph*{$c=-1,\kappa\neq Id$:} For a non-trivial character $\kappa$ of $\mathbb{Z}_2 \times \mathbb{Z}_2$, there is again a unique $1\neq g_\kappa \in G$ such that $\kappa(g_\kappa)=1$, and again denote another independent generator by $\bar{g}_\kappa$. With this information, we can show that there is no nontrivial center. We examine a maximal commuting sub-algebra $\mathfrak{b}<\mfa$ generated by representatives $\mathfrak{b}=\langle  v_L, v_R,\Gamma_L(g_\kappa),\Gamma_R(g_\kappa)\rangle$. The result is a minimal representation of size $4$. See \tabref{tab:SPTneq1}.

\begin{table}
\begin{tabular}{|c|c|c|c|c|c|c|}
\hline 
 & $v_{L}v_{R}$ & $v_{L}$ & $v_{R}$ & $\Gamma_{L}\Gamma_{R}(\bar{g}_{\kappa})$ & $\Gamma_{L}(\bar{g}_{\kappa})$ & $\Gamma_{R}(\bar{g}_{\kappa})$\tabularnewline
\hline 
$\mid\psi\rangle $ & $u$ & $u'$ & $uu'$ & $w$ & $w'$ & $ww'$\tabularnewline
\hline 
$\Gamma_{L}(\bar{g}_{\kappa})$$\mid\psi\rangle $ & $-u$ & $-u'$ & $uu'$ & $-w$ & $-w'$ & $ww'$\tabularnewline
\hline 
$\Gamma_{R}(\bar{g}_{\kappa})$$\mid\psi\rangle $ & $-u$ & $u'$ & $-uu'$ & $-w$ & $w'$ & $-ww'$\tabularnewline
\hline 
$\Gamma_{L}\Gamma_{R}(\bar{g}_{\kappa})\mid\psi\rangle $ & $u$ & $-u'$ & $-uu'$ & $w$ & $-w'$ & $-ww'$\tabularnewline
\hline 
\end{tabular}
\caption{}\label{tab:SPTneq1}
\end{table}

\section{Twisted 1d representations}\label{app:twisted1dreps}
Given a symmetry group of the form $G= G'\times \mathbb{Z}^{\mathscr{T}}_2$ where $\mathbb{Z}^{\mathscr{T}}_2=\{1,T\}$ is the time reversal symmetry group with $\mathscr{T}^2=1$, and $G'$ is some unitary symmetry group, we wish to find all $\kappa(g)\in \text{U}(1)$ obeying 
\be\label{eq:twisted1dapp}
\kappa(gh)= \kappa(g)^{\alpha(h)}  \kappa(h)^{\alpha(g)}. 
\ee
with $\kappa(1)=1$. In particular, note that $1= \kappa(T^2) = \kappa(\mathscr{T})^{-2}$ so that $\eta \equiv\kappa(\mathscr{T})= \pm 1$. Note that for any $g',h' \in G'$
$$
\kappa(g'h')= \kappa(g')  \kappa(h') 
$$
Hence, restricted to $G'$, $\kappa$ is just some 1d representation $\chi$ of $G'$. Hence, each solution $\kappa$ determines an element $\chi \in H^1(G', \text{U}(1))$ and an $\eta \in \mathbb{Z}_2$. These two data determine $\kappa$ entirely: For general $g = (g', T^{ \sigma })$ with $\sigma=0, 1$ the defining relation \eqnref{eq:twisted1dapp} gives 
\be \label{eq:kappadef}
\kappa(g)= \chi(g')^{(-1)^{\sigma}} \eta^{\sigma} 
\ee
Let us now ensure that for any choice of $\chi,\eta$  there is a corresponding $\kappa$ solving \eqnref{eq:twisted1dapp}. Define $\kappa$ through \eqnref{eq:kappadef}.   For any $g=(g',T^{ \sigma }),h=(h',T^{ \tau})$ we have
\begin{align}
\kappa(gh) &=  \chi(g'h')^{(-1)^{\sigma+\tau} } \eta^{\sigma+\tau} \nonumber\\
&= \chi(g')^{(-1)^{\sigma+\tau}}\eta^{\sigma}  \chi(h')^{(-1)^{\sigma+\tau}}  \eta^{\tau} \nonumber\\
&= \chi(g')^{(-1)^{\sigma}\alpha(h)  }\eta^{\sigma \alpha(h)}  \chi(h')^{(-1)^{\tau} \alpha(g)}  \eta^{\tau  \alpha(g)} \nonumber\\
&= \kappa(g)^{\alpha(h)}  \kappa(h)^{\alpha(g)}\nonumber\\ \nonumber
\end{align}
Hence, the distinct solutions to  \eqnref{eq:twisted1dapp} correspond bijectively with the 1d representations of $G'$, and a certain $\mathbb{Z}_2$ index ($\eta=\pm1$. Hence, we say there is a $\mathcal{A}_G=H^{1}(G', U(1))\times \mathbb{Z}_2$ classification.
\end{appendix}
%

\end{document}